\theoremstyle{plain}
\newtheorem{theorem}{Theorem}[section]
\newtheorem{lemma}[theorem]{Lemma}
\theoremstyle{definition}
\newtheorem{definition}[theorem]{Definition}
\theoremstyle{remark}
\begin{document}

\title{Quantifying Privacy Leakage in Split Inference via Fisher-Approximated Shannon Information Analysis}


\author{Ruijun Deng, Zhihui Lu,~\IEEEmembership{Member, IEEE}, and Qiang Duan,~\IEEEmembership{Senior Member, IEEE}, Shijing Hu

\thanks{Ruijun Deng and Shijing Hu are with the College of Computer Science and Artificial Intelligence, Fudan University, Shanghai, China, and with Engineering Research Center of Cyber Security Auditing and Monitoring, Ministry of Education, Shanghai, China (e-mail: rjdeng18@.fudan.edu.cn, sjhu24@m.fudan.edu.cn).}
\thanks{Zhihui Lu is with the College of Computer Science and Artificial Intelligence, Fudan University, Shanghai, China, and also with Shanghai Blockchain Engineering Research Center, Shanghai, China (e-mail: lzh@fudan.edu.cn).}
\thanks{Qiang Duan is with Information Sciences \& Technology Department, Pennsylvania State University, Abington, PA, USA (e-mail: qduan@psu.edu).}
\thanks{Corresponding authors: Zhihui Lu.}
}

\markboth{Journal of \LaTeX\ Class Files,~Vol.~14, No.~8, August~2021}%
{Shell \MakeLowercase{\textit{et al.}}: A Sample Article Using IEEEtran.cls for IEEE Journals}

\IEEEpubid{0000--0000/00\$00.00~\copyright~2021 IEEE}

\maketitle

\begin{abstract}
Split inference (SI) partitions deep neural networks into distributed sub-models, enabling collaborative learning without directly sharing raw data. However, SI remains vulnerable to Data Reconstruction Attacks (DRAs), where adversaries exploit exposed smashed data to recover private inputs. Despite substantial progress in attack–defense methodologies, the fundamental quantification of privacy risks is still underdeveloped. This paper establishes an information-theoretic framework for privacy leakage in SI, defining leakage as the adversary’s certainty and deriving both average-case and worst-case error lower bounds. We further introduce Fisher-approximated Shannon information (FSInfo), a new privacy metric based on Fisher Information (FI) that enables operational and tractable computation of privacy leakage. Building on this metric, we develop FSInfoGuard, a defense mechanism that achieves a strong privacy–utility tradeoff. Our empirical study shows that FSInfo is an effective privacy metric across datasets, models, and defense strengths, providing accurate privacy estimates that support the design of defense methods outperforming existing approaches in both privacy protection and utility preservation. The code is available at https://github.com/SASA-cloud/FSInfo.
\end{abstract}

\begin{IEEEkeywords}
Privacy quantification, data reconstruction attacks, split inference, information theory, Fisher information
\end{IEEEkeywords}

\section{INTRODUCTION}\label{sec:1}


 Machine learning as a service (MLaaS) allows users to access machine learning (ML) services by simply uploading the raw data to be inferred. To enable privacy-preserving MLaaS, model partition, a mainstream research line of distributed machine learning, has been widely employed to segment deep neural networks (DNNs) into multiple sub-models and deploy them on different hosts (e.g., mobile phones and edge servers). 
 This learning paradigm has been adopted in numerous model-distributed co-training and co-inference; for example, split/collaborative inference (SI) \cite{zhang2021autodidactic,ICLR21-two-party-label}, split learning (SL) \cite{deng2023hsfl,ijcai2024p596, oh2022locfedmix}, vertical federated learning (VFL) \cite{chen2023practical}, pipeline parallelism \cite{yuan2023pipeedge,ye2024galaxy}, and their combinations \cite{duan2022combined}. W.l.o.g, we focus on a two-party \textbf{split inference system} (see Fig.~\ref{fig:DRA}) where the inputs are first fed into the bottom model on the client, and then the intermediate smashed data are transmitted to the top model on the server for further prediction, thus hopefully preserving raw data privacy. 
\begin{figure}[!t]
  \centering
  \includegraphics[width=0.85\linewidth]{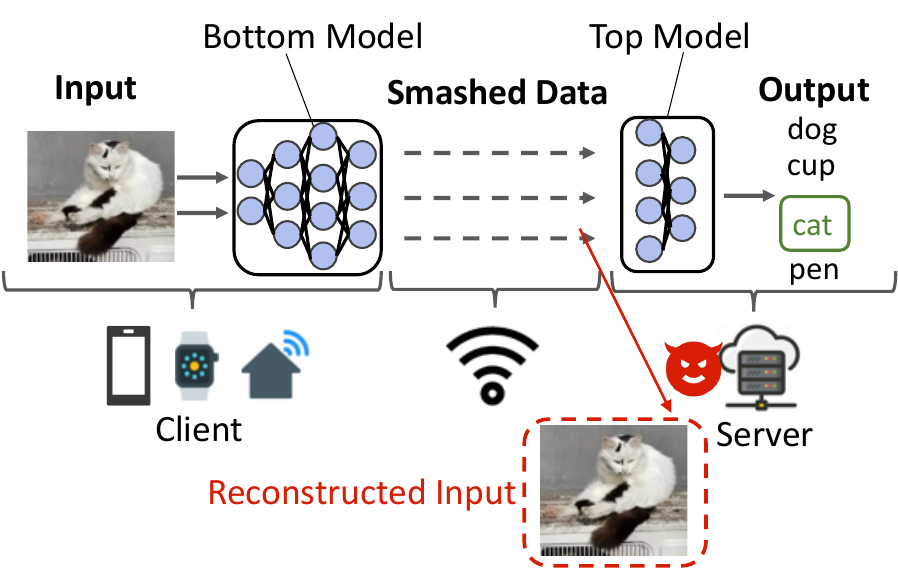}
  \caption{The information flow in a split inference system and the data reconstruction attack.}
    \label{fig:DRA}
\end{figure}

Despite transmitting only smashed data rather than raw inputs, the SI paradigm remains susceptible to privacy inference attacks. Adversaries can execute Membership Inference Attacks (MIAs) to identify training data participation \cite{sec21-evaluation-ml,wu2024quantifying}, Attribute Inference Attacks (AIAs) to deduce sensitive attributes (e.g., demographic information) \cite{liu2022ml}, and Data Reconstruction Attacks (DRAs) to recover original user data \cite{blackbox,yang2022measuring,pasquini2021unleashing}. Our research focuses on DRAs due to their heightened severity: successful execution compromises entire datasets rather than isolated attributes, potentially causing substantial damage.
\textbf{Privacy leakage quantification} (or privacy quantification for short) is an independent concept that transcends the dualistic framework of privacy attack and defense. This critical but less investigated aspect of privacy issues on ML is to develop an appropriate measure (i.e., a privacy metric) of private information leakage from the exposed data (e.g., model parameters, gradients, and smashed data) about the raw input data. In this paper, privacy quantification means assessing the private information exposure from smashed data under DRAs in the SI system of Fig.~\ref{fig:DRA}. Privacy quantification serves a dual purpose: to assess the privacy risks of ML systems for oversight and compliance, and to guide the design of effective defenses by informing architecture and method selection.
\IEEEpubidadjcol

Privacy quantification methods in ML primarily fall into two categories: attack-based and theory-based. Attack-based methods \cite{wu2024quantifying,hu2023quantifying,deng2023hsfl} assess privacy leakage via inference attacks, using success rates as empirical metrics. Notably, advances in attack strategies can be directly applied for quantification. \textit{While practical, these methods depend heavily on adversary strength and lack theoretical rigor, often underestimating privacy risks.}  
In contrast, theory-based methods use formal theoretical frameworks such as metric differential privacy (metric-DP) \cite{singh2023posthoc}, information theory (mutual information; MI) \cite{arevalo2024task,noorbakhsh2024inf2guard}, and hypothesis testing (Fisher information; FI) \cite{maeng2023bounding} to infer privacy leakage. \textit{However, these approaches are either difficult to compute accurately or overly sensitive to outliers, risking overestimation.}

{\bf Contributions.} To address the above two limitations, in this paper, first, we theoretically define privacy leakage as the adversary’s certainty based on information theory and provide guarantees against DRA adversaries. Second, we employ FI to identify the strongest adversary and derive a lower bound on the defined privacy leakage, yielding a practical, training-free metric: Fisher-approximated Shannon Information (FSInfo). 
Then, with FSInfo, we deveolop a defense method FSInfoGuard for defending against DRA adversaries without significant performance degradation. Finally, we empirically validate FSInfo as both an effective assessment tool and a design guide for SI systems and defense strategies. Experiments on image, tabular, and text data show that FSInfo correlates well with DRA performance, and defenses guided by FSInfo outperform existing methods in privacy-utility tradeoffs. We also analyze three key design factors—data distribution, model width/depth, and overfitting—to provide practical insights for system development.

In summary, the contributions of our paper are as follows:
\begin{itemize}
    \item We formalize privacy leakage against DRAs in SI systems within an information-theoretic framework and derive both average-case and worst-case lower bounds on the adversary’s reconstruction error.
    \item We introduce a new privacy metric, FSInfo, which leverages FI to operationally compute the defined leakage. Guided by FSInfo, we design FSInfoGuard, a defense that injects Gaussian noise into the smashed data to mitigate DRAs with minimal performance loss.
    \item We conduct extensive experiments to validate the effectiveness of FSInfo and FSInfoGuard and analyze three key factors—dataset characteristics, model size, and overfitting—that influence privacy leakage.
\end{itemize}

\section{RELATED WORK}\label{sec:2}
The privacy metrics \cite{CSUR-18-metrics-survey}, although having been studied decades ago in the areas of smart meter \cite{kalogridis2010privacy} and database \cite{agrawal2001design}, have not received much attention in the field of machine learning until recently \cite{deng2024invmetrics}. 

Current privacy metrics primarily emerge as byproducts of privacy attacks, demonstrating that attacks developed for breaching privacy can also be applied to quantify it \cite{wu2024quantifying,hu2023quantifying,deng2023hsfl,AsiaCCS20-1dCNN,pasquini2021unleashing,liu2022ml}. For instance, Liu et al. \cite{liu2022ml} developed the ML-Doctor framework, employing four distinct privacy inference attacks to assess private data and model parameter vulnerabilities. \textit{However, these empirical approaches, while intuitive, lack theoretical foundations and exhibit adversary-dependent limitations, potentially yielding overly optimistic privacy assessments with weaker adversaries.}



Theoretical approaches offer deeper privacy analysis through advanced mathematical tools. Pan et al. \cite{pan2022exploring} quantify privacy leakage via activated neuron counts, modeling it as a linear equation system's complexity. However, this method primarily applies to fully connected networks (FCNs) with ReLU activations, limiting its effectiveness for nonlinear functions. While differential privacy (DP) remains a prominent DNN privacy framework, traditional local DP proves unsuitable for split inference \cite{ICLR21-two-party-label,singh2023posthoc,sec24-defending-FL-ita}. Singh et al. \cite{singh2023posthoc} address this limitation by generalizing DP definition to geodesic on the data manifold. \textit{Yet DP's inherent indistinguishability property, while effective against membership inference, demonstrates reduced efficiency and limited efficacy against reconstruction attacks, as empirically shown in \cite{sec24-defending-FL-ita}.}

Information-theoretic approaches, particularly Shannon mutual MI \cite{mireshghallah2020shredder,noorbakhsh2024inf2guard,arevalo2024task} and distance correlation \cite{deng2024invmetrics,sun2022label,sun2021defending}, quantify privacy leakage by measuring the information shared between $A$ and $B$. 
Researchers have applied these measures to various data pairs: input features and smashed data \cite{mireshghallah2020shredder,noorbakhsh2024inf2guard,deng2024invmetrics,sun2021defending}, training data and model parameters \cite{sec24-defending-FL-ita}, and labels and smashed data \cite{sun2022label}. However, the closed-form information quantities (e.g., Shannon MI) of arbitrarily high-dimensional random variables are still an open problem, prompting recent approaches to circumvent direct MI calculation through noise injection \cite{sec24-defending-FL-ita}. Therefore, in ML, sample-based information quantities estimations have been proposed \cite{cheng2020club}. For example, \cite{mireshghallah2020shredder} uses MI calculation tools based on K-nearest neighbor entropy estimation, and \cite{arevalo2024task,sec24-defending-FL-ita,nips24-club} use CLUB, an upper bound of MI and parameterized variational distributions to quantify privacy leakage. \textit{However, these learning-based estimators may fail to approximate reliably when the data dimension increases \cite{cheng2020club}. Also, the tightness of these bounds relies on the choice of variational distribution and the accuracy of trained models, which may lead to approximation inaccuracies.}

FI metrics \cite{hannun2021measuring,guo2022bounding,maeng2023bounding}, computable with a single backward pass, estimate DRA adversary error based on hypothesis testing theory. \textit{However, they are susceptible to outlier activations of smashed data, potentially overestimating privacy leakage.}

\section{PRELIMINARIES}\label{sec:3}
\subsection{Split Inference Pipeline}
W.l.o.g, we consider a two-party split inference paradigm between the client $D_1$ and the server $D_2$ as shown in Fig.~\ref{fig:DRA}. 
We focus on the privacy leakage from the smashed data of split inference about the raw input data. $\mathcal{U} = \{(x_i,y_i)_{i=1}^N \subseteq \mathbb{R}^{d_x} \times \mathbb{R}^{d_y}\}$ is a dataset corresponding to intelligent ML services. The two parties learn from $\mathcal{U}$ with a split DNN $f_\theta(\cdot) = (f_{\theta2}\circ f_{\theta1})(\cdot)=f_{\theta2}(f_{\theta1}(\cdot))$, where $D_1$ holds the bottom model $f_{\theta1}$ and $D_2$ the top model $f_{\theta2}$. The split point (SP) $p\in \{1,2,\dots,L\}$ is the last layer of the bottom model $f_{\theta1}$, where $L$ is the total number of layers of $f_\theta$.
The procedure starts with that $D_1$ takes as input the raw data $x$ and generates smashed data $z=f_{\theta 1}(x)$ as output.
Then, $P_2$ forwards $z$ to get the predictions: $y=f_{\theta 2}(z)$.

\subsection{Threat Model} \label{sec:3-DRA}

We focus on the common setting of DRAs, where the client $D_1$ with private data is benign and the server $D_2$ is untrusted.
\paragraph{Adversary's Goal.} 
In a data reconstruction attack (model inversion attack) \cite{yin2023ginver,blackbox}, the goal of the adversary (edge server) $\mathcal{A}$ is to reconstruct the value of raw data $x$: $\hat{x} = g_\phi(z)$,
where the $g_\phi$ is a attack function parameterized by $\phi$, and $\hat{x}$ is the reconstructed input.

\paragraph{Adversary's Knowledge \& Capability.} \label{sec:TM}
First, as many DRAs rely on different prior knowledge to conduct reconstruction, fewer assumptions about the adversary's knowledge are preferred for broadening FSInfo's applicability. Therefore, we do not leave tight requirements on the adversaries’ knowledge by allowing them to know and decide whether to use or not the plain-text smashed data $z$ and all side information. i.e., the auxiliary dataset $\mathcal{V}$ with the same distribution of $\mathcal{U}$, architecture and parameters of bottom model $f_{\theta1}$. This is compatible with most attacks \cite{blackbox,yin2023ginver,yang2022measuring} and the same as other privacy metrics \cite{maeng2023bounding, pasquini2021unleashing}. 
Then, for simplicity, we only consider the unbiased attack function $g_\phi$, but one can apply our metric to the biased scenarios by using score matching techniques in \cite{maeng2023bounding}.

\section{Privacy Leakage Quantification}\label{sec:4}


\subsection{Privacy and Privacy Leakage}
The semantic meaning of privacy (information) depends on the diverse attributes of users and application contexts, e.g., culture and law. 
In contrast, privacy leakage, no matter what the privacy is (e.g., age or facial features), mainly depends on the ML task and threat model, i.e., the specific adversaries and their capabilities. For example, against the DRA adversaries in image classification tasks, privacy leakage is the value of the input data $x$. Against MIA adversaries, privacy leakage is the membership of a single data $x$, i.e., the knowledge of whether or not $x$ is used in the training data of the model \cite{singh2023posthoc}.


\subsection{Privacy Leakage Formulation}\label{PL-formulate}


In a SI system, we treat the raw input and the corresponding output of bottom model $f_{\theta1}$ as two multi-dimensional random variables, $X$ and $Z$, characterized by distributions $P(X)$ and $P(Z|  X,\theta1)$. After observing the $Z$, the adversary $\mathcal{A}$ out put the estimation $\hat{X}$ of $X$. This process forms a Markov chain $X \rightarrow Z \rightarrow \hat{X}$. 

In information theory, {\itshape entropy} $H(X)$ is a measure of the uncertainty of a random variable. Formally, 
we consider the privacy leakage as the certainty or confidence of $\hat{X}$ given $X$ as defined below:
\begin{definition} \label{def:1}
({\bf Privacy Leakage)}. Privacy Leakage about raw input data $X$ from smashed data $Z$ of model $f_{\theta1}$ against adversary $\mathcal{A}$ is the negative conditional entropy of the estimation $\hat{X}$ given raw input data $X$:
\begin{equation}
\mathcal{L}^{\mathcal{A}}_{f_{\theta1},X}= -H(\hat{X}\vert X), 
\end{equation}
where $\hat{X}$ is the reconstructed input of $\mathcal{A}$ and a larger negative conditional entropy indicates more confident estimation $\hat{X}$ given $X$ and less error of the DRA adversary, thus larger privacy leakage. 
\end{definition}
We now give the relation between the $\mathcal{L}^{\mathcal{A}}_{f_{\theta1},X}$ and the average lower bound of the adversary's error:
\begin{theorem} \label{theorem:REB-Tan}
({\bf Average-Case Reconstruction Error Bound}). For two random variables raw input $X$ and reconstructed input $\hat{X}$ with the same dimension $d_x$, we have:
\begin{equation}
    \frac{\mathbb{E}_{p(X,\hat{X})}[\Vert X-\hat{X} \Vert^2]}{d_x} \geq \frac{e^{\frac{2}{d_x}H(\hat{X}\vert X)}}{2\pi e},
\end{equation}
where the lower bound of the adversary's error is smaller under a higher privacy leakage. The proof is provided in the Appendix. 
\end{theorem}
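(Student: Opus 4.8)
The plan is to reduce the statement to the classical maximum-entropy bound for continuous random variables, namely that among all random vectors in $\mathbb{R}^{d_x}$ with a fixed covariance (equivalently, a fixed expected squared norm per coordinate), the Gaussian maximizes differential entropy. First I would introduce the error vector $E = X - \hat{X} \in \mathbb{R}^{d_x}$ and observe that, conditioned on $X = x$, the randomness in $E$ comes entirely from $\hat{X}$, so $H(E \mid X) = H(\hat{X} \mid X)$ by the translation-invariance of differential entropy (shifting by the constant $x$ does not change entropy). This rewrites the target bound as a statement purely about $E$: we want $\mathbb{E}[\|E\|^2]/d_x \geq e^{(2/d_x)H(E\mid X)}/(2\pi e)$.

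Next I would handle the conditioning by working pointwise in $x$ and then averaging. For each fixed $x$, the conditional distribution $p(E \mid X = x)$ is an honest density on $\mathbb{R}^{d_x}$; the maximum-entropy principle gives $H(E \mid X = x) \leq \frac{d_x}{2}\log\!\bigl(2\pi e \cdot \tfrac{1}{d_x}\mathbb{E}[\|E\|^2 \mid X=x]\bigr)$, since the right-hand side is the differential entropy of an isotropic Gaussian with the same per-coordinate second moment, and the Gaussian is the entropy maximizer under a trace-of-covariance (i.e.\ expected-squared-norm) constraint. Exponentiating, $e^{(2/d_x)H(E\mid X=x)} \leq 2\pi e \cdot \tfrac{1}{d_x}\mathbb{E}[\|E\|^2 \mid X=x]$. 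Then I take expectation over $X$: by Jensen's inequality applied to the concave function $t \mapsto \log t$ (equivalently, convexity of $t \mapsto e^t$ in the exponent), $\mathbb{E}_X\!\bigl[e^{(2/d_x)H(E\mid X=x)}\bigr] \geq e^{(2/d_x)\mathbb{E}_X[H(E\mid X=x)]} = e^{(2/d_x)H(E\mid X)}$, while the right side averages to $2\pi e \cdot \tfrac{1}{d_x}\mathbb{E}_{p(X,\hat{X})}[\|X-\hat{X}\|^2]$. Chaining the two inequalities yields exactly the claimed bound.

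The main obstacle, and the step that needs the most care, is the direction of Jensen: the conditional entropy $H(\hat X\mid X)$ is an \emph{average} over $x$ of the pointwise entropies, so I must make sure the convexity is used so that averaging the pointwise inequalities still produces a valid lower bound on $e^{(2/d_x)H(\hat X\mid X)}$ rather than an upper bound — this is why the $\log$/exp juggling is done in the order above. A secondary technical point is the standing assumption that all the relevant conditional distributions admit densities (so differential entropy is well defined and finite); this is implicit in the paper's setup where $X$, $Z$, $\hat X$ are treated as continuous variables with densities, and I would simply state it. Finally, I should note that equality holds iff $E \mid X$ is isotropic Gaussian with an $x$-independent variance, which is the usual characterization and also clarifies why this is a genuine ``average-case'' (rather than worst-case) bound.
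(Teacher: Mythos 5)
Your proposal is correct and follows essentially the same route as the paper's appendix proof: bound $H(\hat{X}\mid X=x)$ pointwise by the entropy of a Gaussian matched to the conditional error's second moment, then average over $x$ and apply Jensen to recover $H(\hat{X}\mid X)$. The only packaging difference is that your single isotropic maximum-entropy step absorbs what the paper spells out as three steps --- the full-covariance Gaussian bound, the lemma $\det(A)\le(\operatorname{tr}(A)/d_x)^{d_x}$, and the unbiasedness identification of $\operatorname{tr}(\mathrm{Cov}(\hat{X}\mid X=x))$ with the conditional MSE.
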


Definition~\ref{def:1} captures the adversary's uncertainty of $\hat{X}$ given $X$, serving as a naive criterion for measuring the recovery hardness. However, first, this definition remains agnostic to adversarial power, i.e., how accurate estimation the adversary can achieve. For instance, neural network (NN)-based adversaries typically achieve superior reconstruction accuracy compared to maximum likelihood estimation (MLE)-based approaches as its $p(\hat{X})$ is closer to the $p(X)$ \cite{blackbox}.
Second, the corresponding Theorem~\ref{theorem:REB-Tan} only gives average error over all possible adversaries. But privacy is not an average-case metric \cite{aerni2024evaluations}, and it is necessary for the worst-case privacy leakage analysis to avoid damages resulting from the strongest adversary.


{\bf Strongest Adversary (SA).} To address this limitation, we extend the definition of privacy leakage by incorporating adversaries' capabilities. We define the ``strongest'' adversary as the one whose estimate $\hat{X}^s$ achieves the highest accuracy among all possible estimators. For example, if the estimate is unbiased, the SA corresponds to the one with the minimum covariance. Note that ``strongest'' refers to estimation accuracy, not to the adversary's prior knowledge or its ability to influence the SI process. Importantly, this worst-case analysis is fundamentally different from the data-independent worst-case analysis used in DP.


Following this, we introduce the definition of privacy leakage against the strongest adversary:
\begin{definition} \label{def:2}
({\bf Privacy Leakage against the SA)}. Privacy Leakage about raw input data $X$ from smashed data $Z$ of model $f_{\theta1}$ against the strongest adversary $\mathcal{A}^s$ is the negative conditional entropy of estimation $\hat{X}^s$ given raw input data $X$:
\begin{equation}
    \mathcal{L}^{\mathcal{A}^s}_{f_{\theta1},X}= -H(\hat{X}^s\vert X),
\end{equation}
where $\hat{X}^s$ is the reconstructed input of $\mathcal{A}^s$.
\end{definition}
We then derive the relation between the privacy leakage definition in Definition~\ref{def:2} and the worst-case lower bound of the adversary's reconstruction error. We adopt the statistical minimax estimation setting \cite{scarlett2019introductory},
to denote estimation loss as:
\begin{equation}\label{minimax-risk}
    \ell(X,\hat{X}) = \Phi(\rho(X,\hat{X})),
\end{equation}
where $\rho(X,\hat{X})$ is a metric and $\Phi(\cdot)$ is an increasing function maps from $\mathbb{R}^+$ to $\mathbb{R}^+$. Equation~\ref{minimax-risk} is a general formulation shared among many loss functions. For example, the MSE loss ($\ell(X,\hat{X})=\frac{1}{d_x}\Vert X-\hat{X}\Vert_2^2$) used by \cite{maeng2023bounding} as the privacy definition
clearly takes this form \cite{scarlett2019introductory}.
The minimax risk of the adversary's estimation is
\begin{align}
    \mathcal{M}(\mathcal{X},\ell)&=\inf_{\hat{X}\in\mathcal{\hat{X}}}\sup_{X\in\mathcal{X}}\mathbb{E}_{X}[\ell(X,\hat{X})]\\
    &=\sup_{X\in\mathcal{X}}\mathbb{E}_{X}[\ell(X,\hat{X}^s)],
\end{align}
where $\hat{X}$ are lie in subset $\mathcal{X}$. $\mathcal{M}(\mathcal{X},\ell)$ can be interpreted as the largest estimation error achieved by the SA or the least privacy leakage captured by the SA in this paper.

Then we can give the lower bound of the adversary's minimax risk with the variation of Fano inequality \cite{scarlett2019introductory} as:
\begin{theorem} \label{theorem:REB-minimax}
({\bf Worst-Case Reconstruction Error Bound}). Let random variables $X$, $Z$ and $\hat{X}$ form a Markov chain $X \rightarrow Z \rightarrow \hat{X}$ and $X$ lie in the $\mathcal{X}$, such that
\begin{equation}
    \rho(x_i,x_j)\geq \epsilon, \forall{i,j} \in \{1,\dots,|\mathcal{X}|\}, i\neq j.
\end{equation}
Then, we have:
\begin{equation}
    \mathcal{M}(\mathcal{X},\ell)\geq\Phi(\frac{\epsilon}{2})(1-\frac{H(\hat{X}^s)-H(\hat{X}^s\vert X)+log2}{log(|\mathcal{X}|)}),
\end{equation}
where $X$ is uniform over $\mathcal{X}$.
\end{theorem}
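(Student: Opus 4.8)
The plan is to run the classical \emph{reduction-to-testing plus Fano} argument for minimax lower bounds, taking the strongest adversary's reconstruction $\hat{X}^s$ in place of a generic estimator and exploiting the fact (established just above the statement) that $\mathcal{M}(\mathcal{X},\ell)=\sup_{X\in\mathcal{X}}\mathbb{E}_{X}[\ell(X,\hat{X}^s)]$. Write $M:=|\mathcal{X}|$ and let $x_1,\dots,x_M$ be the points of $\mathcal{X}$, which by hypothesis are pairwise $\rho$-separated by $\epsilon$.

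First I would turn the reconstruction into an $M$-ary hypothesis test: define $\psi(\hat{X}^s):=\arg\min_{i}\rho(x_i,\hat{X}^s)$, with ties broken arbitrarily. The key geometric observation is that a test error forces a large reconstruction error: if the true input is $x_i$ but $\psi(\hat{X}^s)=j\neq i$, then $\rho(x_j,\hat{X}^s)\le\rho(x_i,\hat{X}^s)$, so by the triangle inequality $\epsilon\le\rho(x_i,x_j)\le\rho(x_i,\hat{X}^s)+\rho(x_j,\hat{X}^s)\le 2\rho(x_i,\hat{X}^s)$, hence $\rho(x_i,\hat{X}^s)\ge\epsilon/2$ and therefore $\ell(x_i,\hat{X}^s)=\Phi(\rho(x_i,\hat{X}^s))\ge\Phi(\epsilon/2)$ since $\Phi$ is increasing. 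Conditioning on $X=x_i$ gives $\mathbb{E}[\ell(x_i,\hat{X}^s)\mid X=x_i]\ge\Phi(\epsilon/2)\,\Pr[\psi(\hat{X}^s)\neq i\mid X=x_i]$. Next I would pass from the supremum risk to an average over the uniform prior on $\{x_1,\dots,x_M\}$ (a supremum over this finite set dominates the average): $\mathcal{M}(\mathcal{X},\ell)\ge\frac1M\sum_{i=1}^M\mathbb{E}[\ell(x_i,\hat{X}^s)\mid X=x_i]\ge\Phi(\epsilon/2)\,P_e$, where $P_e:=\Pr[\psi(\hat{X}^s)\neq X]$ is the average error probability of the induced test under the uniform prior.

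It then remains to lower-bound $P_e$. Since $X\to Z\to\hat{X}^s\to\psi(\hat{X}^s)$ is a Markov chain, Fano's inequality for the $M$-ary test gives $H(X\mid\psi(\hat{X}^s))\le\log 2+P_e\log(M-1)\le\log 2+P_e\log M$, i.e. $P_e\ge\bigl(H(X\mid\psi(\hat{X}^s))-\log 2\bigr)/\log M$. Because $X$ is uniform, $H(X)=\log M$, so $H(X\mid\psi(\hat{X}^s))=\log M-I(X;\psi(\hat{X}^s))$, and the data-processing inequality yields $I(X;\psi(\hat{X}^s))\le I(X;\hat{X}^s)=H(\hat{X}^s)-H(\hat{X}^s\mid X)$. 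Chaining these bounds, $P_e\ge 1-\bigl(H(\hat{X}^s)-H(\hat{X}^s\mid X)+\log 2\bigr)/\log|\mathcal{X}|$; substituting into $\mathcal{M}(\mathcal{X},\ell)\ge\Phi(\epsilon/2)P_e$ gives exactly the claimed inequality (and the bracket is $1+(\mathcal{L}^{\mathcal{A}^s}_{f_{\theta1},X}-H(\hat{X}^s)-\log 2)/\log|\mathcal{X}|$ in the leakage notation of Definition~\ref{def:2}).

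The argument is largely textbook, so I do not expect a deep obstacle; the points that require care are (i) making sure the triangle-inequality step leaves the factor $\Phi(\epsilon/2)$ rather than $\Phi(\epsilon)$ and that monotonicity of $\Phi$ is genuinely invoked; (ii) the step that replaces the worst-case supremum by a uniform-prior average, which is legitimate precisely because $\mathcal{X}$ is taken to be a finite $\epsilon$-packing with $X$ ranging over it — this is the modeling assumption that ties the abstract $\mathcal{X}$ to a concrete packing and is, I think, the subtlest point; and (iii) the use of the data-processing inequality so that the mutual information appearing in Fano can be rewritten as $H(\hat{X}^s)-H(\hat{X}^s\mid X)$, which is what makes the bound computable in terms of the adversary-facing leakage quantity.
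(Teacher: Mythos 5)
Your proof is correct and follows essentially the same route as the paper: the paper simply invokes Theorem~9 of \cite{scarlett2019introductory} (reduction to an $M$-ary test plus Fano) with $V=X_V=X$ and $\hat{V}=\hat{X}$, then rewrites $I(X;\hat{X}^s)=H(\hat{X}^s)-H(\hat{X}^s\vert X)$, whereas you re-derive that cited lemma from scratch via the nearest-point decoder, the triangle-inequality step yielding $\Phi(\epsilon/2)$, Fano's inequality, and data processing. The instantiation and the final chaining of bounds match the paper's argument exactly, so the only difference is that your version is self-contained.
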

Specifically, $\ell$ represents a general loss function adaptable to specific application scenarios, and increased privacy leakage $\mathcal{L}^{\mathcal{A}^s}_{f_{\theta1},X}= -H(\hat{X}^s\vert X)$ directly corresponds to a lower minimax risk and less adversarial estimation error.
The proof for Theorem~\ref{theorem:REB-minimax} is provided in the Appendix.

\subsection{Computing Privacy Leakage}\label{sec:5}
Definition~\ref{def:2} gives a privacy leakage definition as adversaries' certainty. However, it can not be used directly for practitioners to evaluate the security level of a split inference framework, as the calculation of conditioned entropy or the MI of arbitrary high-dimensional random variables is a long-standing open problem \cite{noorbakhsh2024inf2guard,arevalo2024task,xiao2023pac}. 
Therefore, as shown in Fig.~\ref{fig:FMI}, we translate Fisher information into a Shannon information quantity to estimate the certainty of the ``strongest'' adversary, i.e. $\mathcal{L}^{\mathcal{A}^s}_{f_{\theta1},X}$. Then we use the Gaussian distribution to get maximum conditional entropy, offering an operational privacy metric, FSInfo.

\begin{figure}[!t]
  \centering
  \includegraphics[width=\linewidth]{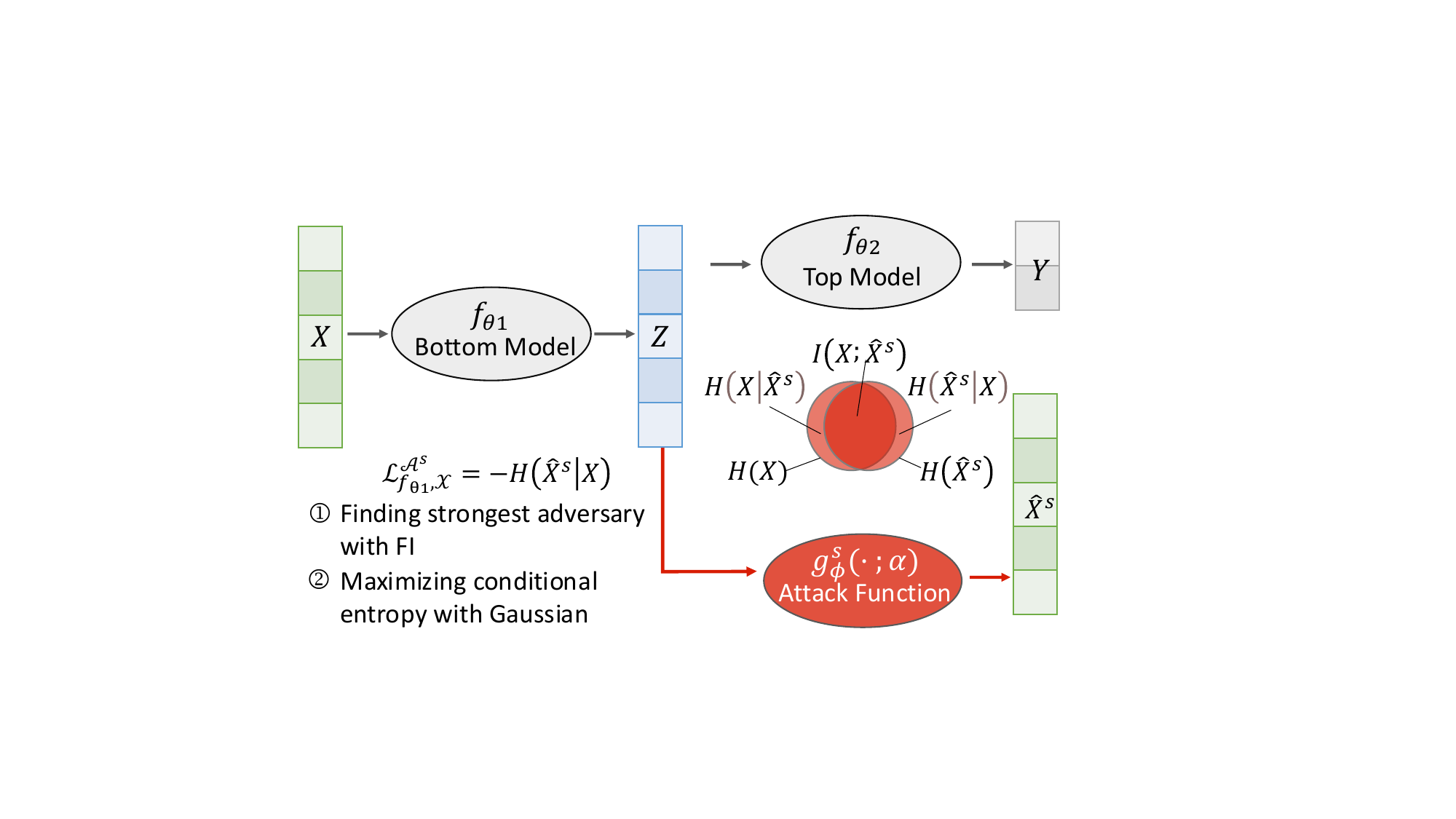}
  \caption{The proposed FSInfo privacy metric.}
    \label{fig:FMI}
\end{figure}



{\bf Fisher information} is a fundamental concept in statistics hypothesis testing \cite{martens2020new}, which is related to the asymptotic variability of an estimator, where higher FI means lower estimation error:
\begin{definition}
({\bf Fisher Information Matrix) \cite{cover1999elements}}. For two random variables $x \sim p(x)$ and $z\sim p(z\vert x)$, the element of Fisher Information Matrix (FIM) or the Fisher Information (FI) $F_{z\vert x}$ of $z$ w.r.t. the $x$ is defined as:
\begin{align}
    F_{z\vert x}(i,j) &= \mathbb{E}_z[(\frac{\partial}{\partial x_i}\log p(z \vert x))(\frac{\partial}{\partial x_j}\log p(z \vert x))],
\end{align}
where $\log p(z \vert x)$ is the log conditional probability density function (p.d.f.) of $z$ given $x$. 
\end{definition}
Note that, FI is defined based on the distribution; however, in the SI, the neural network $f_{\theta1}$ is a deterministic function, and $p(z\vert x)=p(f_{\theta1}(x) \vert x) = 1$. Therefore, we follow \cite{maeng2023bounding,martens2020new} to fit FI into DL models by combining $f_{\theta1}$ with a random noise: $\tilde{f}_{\theta1}(\cdot)=r(f_{\theta1}(\cdot))$, where $r(z)=z+\delta, \delta \sim \mathcal{N}(0, \sigma^2)$. To this end, we can calculate FI of raw input $x$ contained by the smashed data $z$ as:
\begin{align}
    F_{z\vert x} =& \frac{1}{\sigma^2}J_{f_{\theta1}}(x)^\top J_{f_{\theta1}}(x), \label{eq:fisher-comp}
\end{align}
where $J_{f_{\theta1}}(x)$ is the Jacobian of $z$ with respect to the raw input $x$.


The significance of FI is shown in the following theorem:
\begin{theorem}\label{theorem-CRB}
({\bf Cram\'{e}r-Rao Bound)} \cite{cover1999elements}. The mean-squared error of any unbiased estimator $T(z)$ of the parameter $x$ is lower bounded by the reciprocal of the Fisher information:
\begin{align}
    Cov(T) \ge F_{z|x}^{-1},
\end{align}
where $\geq$ is used in the Löwner partial order.
\end{theorem}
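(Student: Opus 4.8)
The plan is to prove Theorem~\ref{theorem-CRB} by the classical score-function argument, reduced to a positive-semidefiniteness statement about a joint covariance matrix. First I would write out the unbiasedness hypothesis as $\mathbb{E}_z[T(z)\mid x]=x$ for every value of the parameter $x$, i.e.\ $\int T(z)\,p(z\vert x)\,dz = x$, and differentiate both sides with respect to $x$. Under the usual regularity conditions (the support of $p(z\vert x)$ does not depend on $x$, and differentiation may be interchanged with integration), this yields $\int T(z)\,\nabla_x p(z\vert x)^\top\,dz = I_{d_x}$.

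Next I would introduce the score $s(z,x)=\nabla_x \log p(z\vert x)$, use the identity $\nabla_x p(z\vert x) = p(z\vert x)\,s(z,x)$, and recall that the score has zero mean, $\mathbb{E}_z[s(z,x)]=0$, obtained by differentiating $\int p(z\vert x)\,dz = 1$. Combining these gives the key cross-covariance identity $\mathbb{E}_z[(T(z)-x)\,s(z,x)^\top] = I_{d_x}$, while by definition $\mathbb{E}_z[s(z,x)\,s(z,x)^\top] = F_{z\vert x}$, and $\mathbb{E}_z[T(z)-x]=0$ again by unbiasedness.

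Then I would stack the centered estimator $U=T(z)-x$ and the score $V=s(z,x)$ into one random vector and observe that its covariance matrix
\[
\begin{pmatrix} \mathrm{Cov}(T) & I_{d_x} \\ I_{d_x} & F_{z\vert x} \end{pmatrix} \succeq 0
\]
is positive semidefinite, since the covariance of any random vector is. Assuming $F_{z\vert x}$ is positive definite (so $F_{z\vert x}^{-1}$ exists), the Schur-complement criterion for block PSD matrices immediately gives $\mathrm{Cov}(T) - I_{d_x}\,F_{z\vert x}^{-1}\,I_{d_x} = \mathrm{Cov}(T) - F_{z\vert x}^{-1} \succeq 0$, which is exactly the claimed L\"owner inequality.

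The main obstacle is not the algebra but the hypotheses that make each step legitimate: justifying the interchange of differentiation and integration in the first step, and the nondegeneracy of $F_{z\vert x}$. Both hold in the setting of interest here, because $p(z\vert x)$ is the Gaussian smoothing $r(f_{\theta1}(x))$ with fixed variance $\sigma^2>0$, whose density is smooth in $x$ with full support, and $F_{z\vert x} = \sigma^{-2} J_{f_{\theta1}}(x)^\top J_{f_{\theta1}}(x)$ is invertible whenever the Jacobian $J_{f_{\theta1}}(x)$ has full column rank. If one does not wish to assume invertibility, the same block-matrix argument restricted to the range of $F_{z\vert x}$ yields the bound with the Moore--Penrose pseudoinverse $F_{z\vert x}^{+}$ in place of $F_{z\vert x}^{-1}$.
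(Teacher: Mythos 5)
Your proof is correct, and it is the canonical score-function/Schur-complement derivation of the multivariate Cram\'{e}r--Rao bound --- which is exactly the argument behind the result the paper simply cites from \cite{cover1999elements} without reproving. The paper offers no proof of its own here, so there is nothing to diverge from; your version is complete, and your closing caveat is worth keeping: in this paper's setting $F_{z\vert x}=\sigma^{-2}J_{f_{\theta1}}(x)^{\top}J_{f_{\theta1}}(x)$ can genuinely be singular (e.g., when the smashed-data dimension falls below $d_x$ after pooling), so the L\"owner inequality as stated implicitly assumes a full-column-rank Jacobian, and the Moore--Penrose form $\mathrm{Cov}(T)\succeq F_{z\vert x}^{+}$ is the honest general statement.
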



Theorem~\ref{theorem-CRB} establishes the optimality of the SA. Considering the DRA adversary as an estimator $T(z)$, the strongest adversary should have the minimal covariance $F_{Z|X}^{-1}$ and a mean of $\mathbb{E}[X]$ (according to the unbiased assumptions in Section~\ref{sec:TM}). The $H(\hat{X}|X)$, no matter what types of distribution the $p(\hat{X}|X)$ and $p(\hat{X})$ are, is smaller than the entropy of a Gaussian distribution with the same covariance and mean. Because, with the same mean and covariance, the Gaussian distribution has the largest entropy \cite{sec24-defending-FL-ita}. Therefore, the possibly least captured leakage of the SA
is: $-H(\mathcal{N}(x,F_{Z|X}^{-1}))=-\frac{1}{2}\log(\frac{(2\pi e)^{d_x}}{det(F_{Z|X})})$. Consequently, we summarize the above discussion into the following Theorem~\ref{theorem-LP-of-SA}.

\begin{theorem}\label{theorem-LP-of-SA}
{\bf(Privacy Leakage Lower Bound of SA)}. A Lower Bound of the privacy leakage against the strongest adversary is:
\begin{align}
        \mathcal{L}^{\mathcal{A}^s}_{f_{\theta1},X}=-H(\hat{X}^s \vert X)\geq -\frac{1}{2}\log(\frac{(2\pi e)^{d_x}}{det(F_{Z|X})}), \label{eq:PL-SA-Cal}
\end{align}
where the 
$det(F_{Z|X})$ is the determinant of $F_{Z|X}$. 
\end{theorem}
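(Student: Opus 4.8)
Proof proposal for Theorem~\ref{theorem-LP-of-SA}.

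The plan is to treat the strongest adversary as an efficient statistical estimator and then combine the Cram\'{e}r--Rao bound (Theorem~\ref{theorem-CRB}) with the maximum-entropy property of the Gaussian. Concretely, I would view the reconstruction map $\hat{X}^s = g_\phi(Z)$ as an estimator $T(Z)$ of the ``parameter'' $X$ from the noisy channel $Z = \tilde f_{\theta1}(X) = f_{\theta1}(X) + \delta$ with $\delta \sim \mathcal{N}(0,\sigma^2 I)$, so that $p(z\mid x)$ is a genuine smooth density and the Fisher information $F_{Z\mid X}$ of Eq.~\eqref{eq:fisher-comp} is well defined. By the unbiasedness assumption imposed in Section~\ref{sec:TM}, for each fixed $x$ the conditional law $p(\hat{X}^s\mid X=x)$ has mean $x$; and since, by the definition given in the Strongest Adversary paragraph, $\mathcal{A}^s$ is the estimator of \emph{minimal} covariance, its conditional covariance equals the Cram\'{e}r--Rao floor $F_{Z\mid X}^{-1}$ (assuming, as is standard in this operational setting, that an efficient estimator is attained).

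Second, I would invoke the maximum-entropy principle: among all random vectors in $\mathbb{R}^{d_x}$ with a prescribed mean and covariance $\Sigma$, the Gaussian $\mathcal{N}(x,\Sigma)$ uniquely maximizes the differential entropy, with value $\tfrac12\log\big((2\pi e)^{d_x}\det\Sigma\big)$ (Cover--Thomas \cite{cover1999elements}; also used in \cite{sec24-defending-FL-ita}). Applying this with $\Sigma = F_{Z\mid X}^{-1}$ gives, for each realization $X=x$,
\begin{equation}
H(\hat{X}^s\mid X=x)\;\le\;\tfrac12\log\!\Big((2\pi e)^{d_x}\det\big(F_{Z\mid X}^{-1}\big)\Big)\;=\;\tfrac12\log\!\Big(\tfrac{(2\pi e)^{d_x}}{\det(F_{Z\mid X})}\Big).
\end{equation}
Negating both sides yields $\mathcal{L}^{\mathcal{A}^s}_{f_{\theta1},X} = -H(\hat{X}^s\mid X)\ge -\tfrac12\log\big((2\pi e)^{d_x}/\det(F_{Z\mid X})\big)$, which is the claim; here $F_{Z\mid X}$ is understood as the per-sample Fisher matrix of Eq.~\eqref{eq:fisher-comp}, and the fully averaged conditional-entropy version then follows by taking expectation over $X\sim p(X)$, with the mild caveat that aggregating the per-sample Fisher matrices into a single matrix requires invoking concavity of $\log\det$.

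The main obstacle — and the step that deserves the most care in the write-up — is justifying that $\mathcal{A}^s$ actually realizes the Cram\'{e}r--Rao floor: the CRB only lower-bounds the covariance in the L\"owner order, so without attainability one merely gets $\Sigma \succeq F_{Z\mid X}^{-1}$, hence $\det\Sigma \ge \det(F_{Z\mid X}^{-1})$, which on its own points in the unhelpful direction. The resolution is that ``strongest'' was \emph{defined} as minimal covariance, so we are bounding the entropy of the best-possible adversary and the Gaussian with covariance $F_{Z\mid X}^{-1}$ is the correct extremal surrogate. I would also record the regularity conditions under which the object is meaningful and tight — differentiability of $x\mapsto f_{\theta1}(x)$, finiteness of $F_{Z\mid X}$, and non-degeneracy ($J_{f_{\theta1}}(x)$ of full column rank, so $\det(F_{Z\mid X})>0$); when $F_{Z\mid X}$ is singular the stated bound is still valid but vacuous ($-\infty$). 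A secondary, minor point is whether $\hat{X}^s$ even admits a density: since the max-entropy inequality holds for any distribution with the prescribed second moments (singular ones having differential entropy $-\infty$), this does not affect the argument.
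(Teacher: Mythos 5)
Your proof follows essentially the same route as the paper: identify the strongest adversary as an unbiased estimator whose conditional covariance is the Cram\'{e}r--Rao floor $F_{Z|X}^{-1}$, upper-bound $H(\hat{X}^s\mid X)$ by the differential entropy of the Gaussian with that mean and covariance, and negate. The attainability caveat you flag (the CRB alone gives $\Sigma \succeq F_{Z|X}^{-1}$, which bounds $\det\Sigma$ from below rather than above) is genuine, and the paper's own derivation simply asserts that the strongest adversary has covariance exactly $F_{Z|X}^{-1}$ without addressing it, so your write-up is, if anything, the more careful of the two.
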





Now, with the raw input $X$, we can directly calculate the lower bound of privacy leakage against the strongest adversary $\mathcal{A}^s$ in Definition~\ref{def:2} through Equation~\ref{eq:PL-SA-Cal},
where all the terms can be easily implemented.


{\bf Acceleration.} The calculation of $ -\frac{1}{2}\log(\frac{(2\pi e)^{d_x}}{det(F_{Z|X})})$ involves computing the Jacobian of $z$ w.r.t. $x$, which is time-consuming when the $d_x$ gets large (e.g., for a $32\times32\times3$ image in CIFAR10 \cite{krizhevsky2009learning}, the number of rows of $F_{Z|X}$ is 3072). We speed up this process by first approximating $F_{Z|X}$ with its diagonal version, which is widely used in transfer learning \cite{fan2021discriminative,xue2021toward}:
\begin{align}
-\frac{1}{2}\log(\frac{(2\pi e)^{d_x}}{det(F_{Z|X})})
        =& - \frac{1}{2}(d_x \log(2\pi e) - \sum_{i}^{d_x} \log(\Lambda_i)),
\end{align}
where the $\Lambda_i$ is the diagonal element of $F_{Z|X}$. Second, we use subsampling techniques such as random sampling and average pooling to reduce input dimensions $d_x$.

{\bf Privacy Leakage per Dimension.} Although $\mathcal{L}^{\mathcal{A}^s}$ can now outline the privacy leakage risk of an SI system, its value exhibits considerable variation w.r.t. the dimensionality of the input data, e.g., it tends to increase with the dimension of the input data.
Therefore, we adopt the average leakage across all dimensions. FSInfo is a computable privacy metric, defined as an unbiased estimate of this average leakage using data pairs $\{x, y\}\in \mathcal{U}$:
\begin{definition} \label{def:FSInfo}
    ({\bf FSInfo}). In an SI system, the privacy metric FSInfo about $X$ in the smashed data $Z=f_{\theta1}(X)$ is:
    \begin{align}
        FSInfo 
        =& - \frac{1}{2d_x}\mathbb{E}_{x\sim p(X)}[d_x \log(2\pi e) - \sum_{i}^{d_x} \log(\lambda_i)], \label{eq:Ifisher}  
    \end{align} 
    where the $\lambda_i$ is the diagonal element of $F_{z|x}$.
\end{definition}
Definition~\ref{def:FSInfo} gives the essence link in DRA between the Fisher information and Shannon information, where the maximal Shannon information gain can be achieved by adversary $\mathcal{A}^s$ whose estimation error is lower bounded by the Fisher information.

\subsection{Guiding Calculation of Defensive Noise}
FSInfo not only quantifies privacy leakage in SI systems but also provides a risk signal that can guide the design of defense mechanisms. Based on this idea, we develop FSInfoGuard, a defense against DRAs for SI systems.

Privacy protection in split inference differs fundamentally from federated learning. In SI, smashed data preserves a one-to-one mapping to the raw input rather than aggregating information, meaning that batch size—an important factor in FL privacy—has limited influence on SI privacy leakage \cite{geiping2020inverting}. Consequently, obfuscation-based defense mechanisms, particularly those utilizing regularization or closed-form noise injection techniques, have become predominant in split inference systems \cite{singh2024simba}. Regularization methods optimize the client-side model with heuristic regularization terms in the loss function to guide it to output smashed data that is hard to invert. Closed-form defenses directly add noise with an analytical noise scale on the smashed data to cater for the desired security level (e.g., metric-DP \cite{singh2023posthoc}).


These obfuscation-based defenses typically rely on a privacy-related objective or feedback signal to perturb the smashed data. \textbf{Nopeek} \cite{ICDM-20-nopeek} and \textbf{Shredder} \cite{mireshghallah2020shredder} are representative regularization-based approaches. Nopeek adopts adversarial representation learning to encourage the bottom model to produce smashed data containing less information about the raw input, using the loss: $\alpha DLoss(x,z) + CCE$, where $DLoss$ is a privacy metric estimates the correlation between $x$ and $z$ and $CCE$ is the cross-entropy loss. Shredder instead trains a noise layer to generate Gaussian perturbations that explicitly reduce the estimated mutual information (a privacy metric) between $x$ and $z$. \textbf{inv\_dFIL\_def} is a closed-form defense that adds Gaussian noise with scale $\sigma=\sqrt{\frac{\text{Tr}(J^{\top}_{f_{\theta 1}} J_{f_{\theta 1}})}{d \times \text{dFIL}}}$, where $f_{\theta1}$ is the bottom model, $d$ is the input dimension, and dFIL is the target privacy metric. This noise guarantees the desired dFIL privacy level.

FSInfo can serve as a standalone privacy metric for designing both regularization-based and closed-form defenses. In principle, one could minimize FSInfo directly by adding it to the training loss; however, this empirical regularization strategy offers no provable security guarantees. Motivated by this limitation, we introduce \textbf{FSInfoGuard}, a closed-form noise-based defense that injects Gaussian noise calibrated to meet a target FSInfo level. From Definition~\ref{def:FSInfo} and Equation~\ref{eq:fisher-comp}, achieving a specific FSInfo requires adding Gaussian noise with zero mean and scale $\sigma = \frac{\det(J_{f_{\theta 1}}^TJ_{f_{\theta 1}})^\frac{1}{2d}}{e^{FSInfo}(2\pi e)^\frac{1}{2}}$ to the smashed data. FSInfoGuard applies this noise during both training and inference: during training, the model learns from perturbed smashed data, and during inference, the injected noise suppresses privacy leakage and mitigates DRAs.




\section{Experiments}\label{sec:6}
In this section, we first introduce the experimental setups. Then, we evaluate FSInfo across various models, datasets, and defenses to validate its effectiveness and highlight the prevalence of privacy leakage. We then show how FSInfo can guide the design of privacy-preserving architectures and methods by comparing the privacy-utility tradeoff of FSInfoGuard with other defenses. Finally, we analyze three factors that influence privacy leakage: data characteristics, model size, and overfitting.


\subsection{Setups}
\subsubsection{Datasets and Models}

We use four image datasets (CelebA \cite{liu2015faceattributes}, CIFAR-10 \cite{krizhevsky2009learning}, TinyImageNet \cite{le2015tiny}, and MNIST \cite{lecun1998gradient}), two tabular datasets (Purchase100 \cite{sec21-evaluation-ml} and Home Credit \cite{home-credit-kaggle}), and one text dataset (GLUE/SST2 \cite{wang2018glue}) for our experiments. We use the train set for model training, and the test set for FSInfo computing. We use VGG-5 \cite{simonyan2014very}, VGG-9 \cite{simonyan2014very}, ResNet-18 \cite{he2016deep}, ResNet-34 \cite{he2016deep} for image data, FCNs for tabular data, and DistilBert \cite{sanh2019distilbert} for text data. The number of critical layers and the architecture of these models implemented are presented in the Appendix. The split layer is identified by the index of that critical layer, starting from 1. 
We only present the results for the image dataset for lack of space, and the additional results can be found in the Appendix.


\subsubsection{Attacks and Defenses}\label{compared-methods}

Following \cite{noorbakhsh2024inf2guard}, we adopt an NN-based DRA approach \cite{blackbox}. We adopt three commonly used metrics to evaluate the attack performance: visual invertibility, Structural Similarity Index Measure (SSIM), and Mean Squared Error (MSE). 
\underline{Visual invertibility} measures the similarity of the raw input and the reconstructed one by human perception \cite{sun2024privacy}. We visualize the reconstructed images and observe their clarity and similarity to the raw input. This metric offers semantic judgment for privacy information observed by the adversary. A higher visual quality means a larger privacy leakage.
\underline{SSIM} is used to assess the perceived quality of generated images, which aligns closely with human visual perception. A higher SSIM indicates greater privacy leakage.
\underline{MSE} is also an objective criterion, which measures the error or distance between two images. A large MSE indicates less privacy leakage.
Note that the attack efficacy only establishes a lower bound on the actual privacy leakage, rather than the ground truth---at least, what the attacker has already obtained has been leaked. 


We use three defense methods in the experiments, including two regularization defenses (Nopeek \cite{ICDM-20-nopeek} and Shredder \cite{mireshghallah2020shredder}) and one closed-form defense (inv\_dFIL\_def \cite{maeng2023bounding}). First, we adjust the defense strength of Nopeek and inv\_dFIL\_def to observe the corresponding FSInfo. Second, we compare the privacy-utility tradeoff between FSInfoGuard and these three defenses.

\subsection{Evaluating FSInfo among Different Datasets}

\begin{figure*}[t]
  \centering
  \subfloat[VGG-5, VGG-9, ResNet-18 on CIFAR-10.\label{subfig:mse-cifar10}]{
    \includegraphics[width=0.45\linewidth]{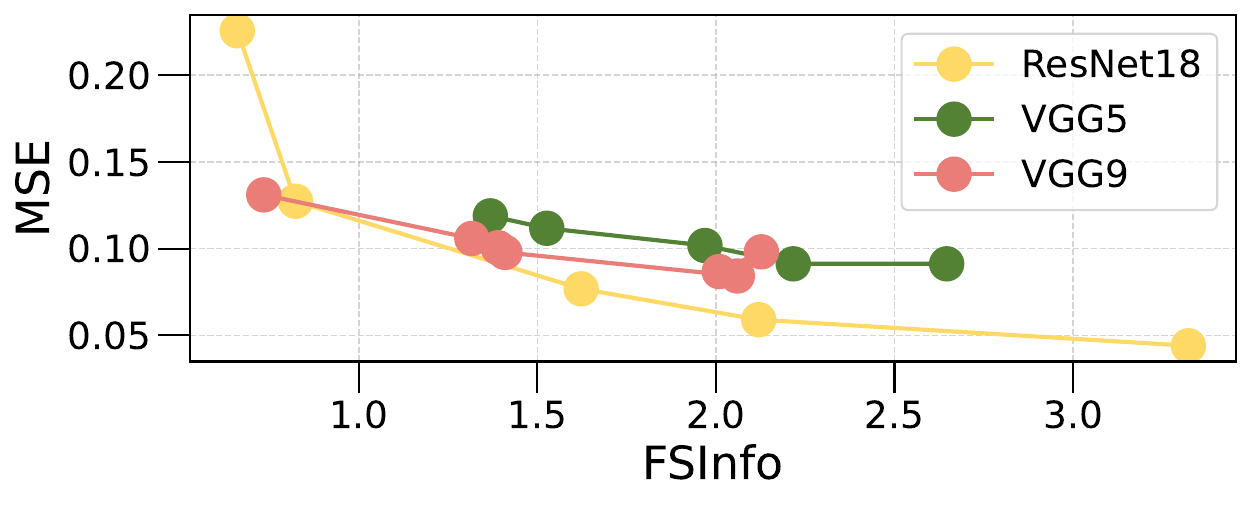}
  }
  \subfloat[VGG-5 and VGG-9 on MNIST.\label{subfig:mse-mnist}]{
    \includegraphics[width=0.45\linewidth]{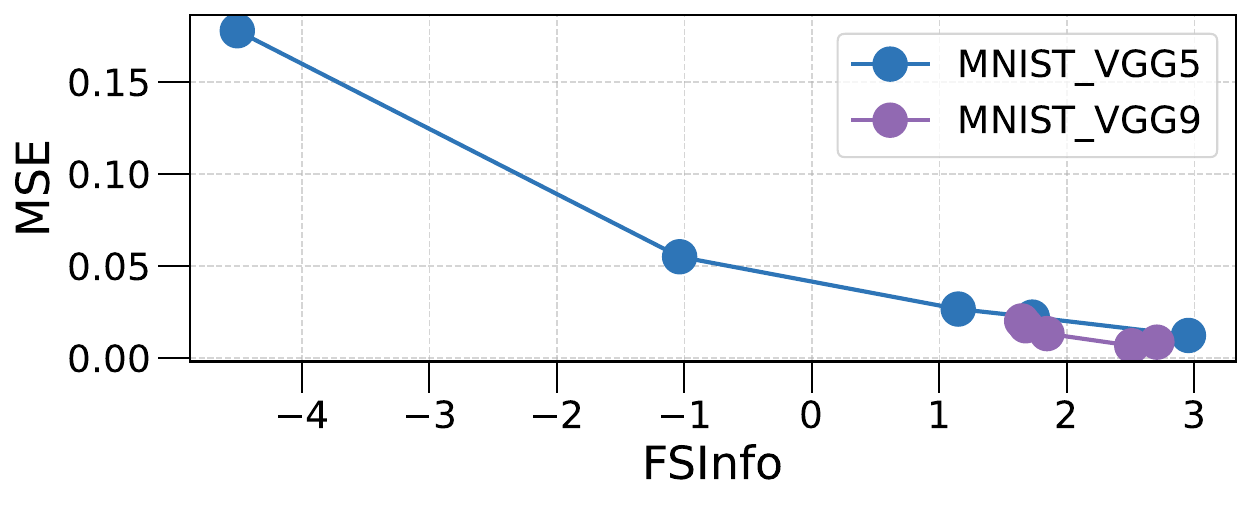}
  }
  \vspace{0.01cm}
  \subfloat[DistilBert on GLUE/SST2.\label{subfig:mse-distilbert}]{
    \includegraphics[width=0.45\linewidth]{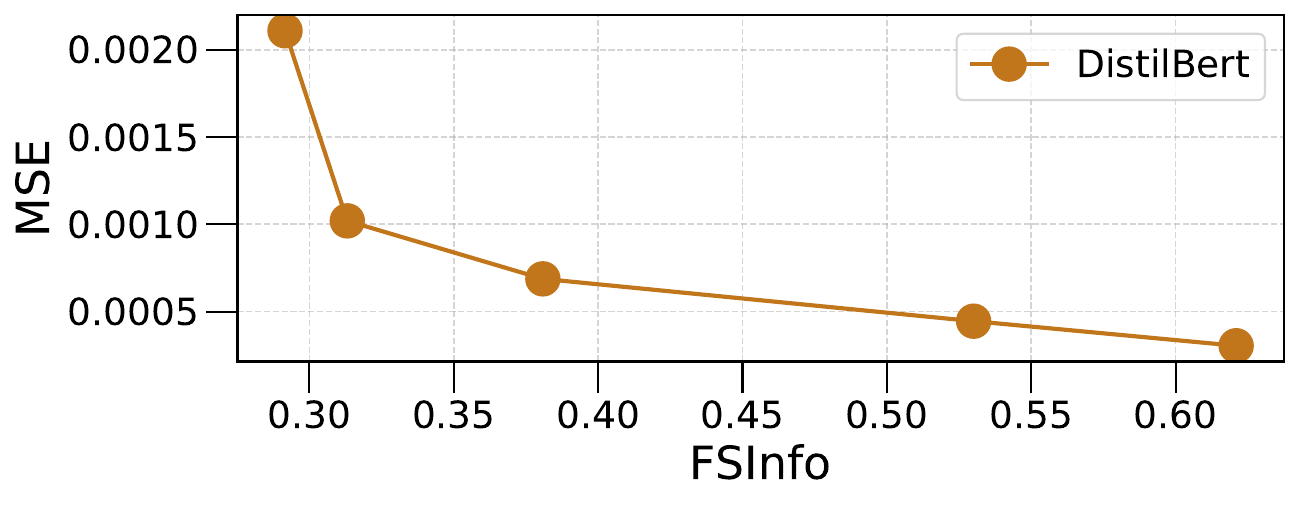}
  }
  \subfloat[FCNs on Purchase100 and Home Credit.\label{subfig:mse-tabular}]{
    \includegraphics[width=0.45\linewidth]{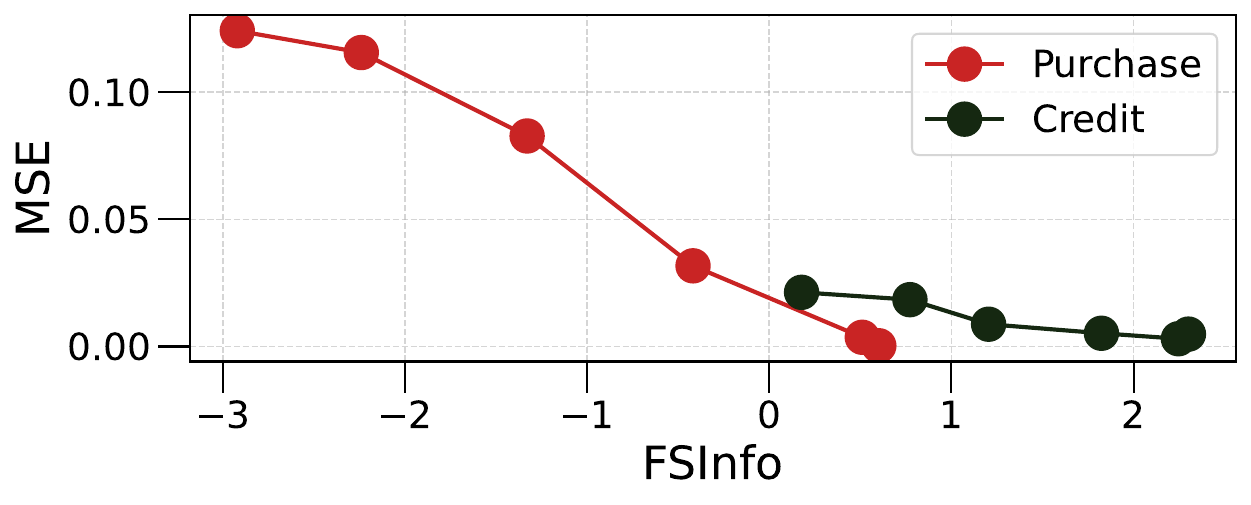}
  }
  \caption{MSE vs. FSInfo across different datasets and models.}
  \label{fig:exp1-mse-diff-data}
\end{figure*}

In this section, we compute FSInfo on multiple modalities to examine whether it effectively reflects the degree of privacy leakage and aligns with the DRA adversary’s MSE. For each dataset, we vary the model’s split point to obtain smashed data with different exposure levels, enabling a clearer comparison between FSInfo and actual attack performance. For image (CIFAR-10, MNIST) and tabular (Purchase, Credit) datasets, we apply the inverse-network DRA and compute FSInfo of the smashed data with respect to the raw inputs. For text data, the discrete inputs are tokenized and passed through an embedding layer before entering the transformer. In this case, following Maeng et al. \cite{maeng2023bounding}, we adopt an MLE-based DRA \cite{blackbox} and compute FSInfo with respect to the token embeddings.


Fig.~\ref{fig:exp1-mse-diff-data} presents the layer-averaged FSInfo and reconstruction MSE. \textbf{Overall, FSInfo shows the expected negative correlation with MSE across all datasets}: higher FSInfo indicates greater privacy leakage, enabling the DRA adversary to reconstruct inputs more accurately. This aligns with the information-theoretic interpretation of FSInfo—larger FSInfo implies more exposed information and reduced uncertainty for the adversary. However, \textbf{the correlation is less clear for VGG-9}, where the final data point shows an upward trend (Fig.~\ref{fig:exp1-mse-diff-data}(a)). This may indicate that the adversary did not fully exploit the leaked information or that its capability is limited relative to the leakage level. The last two points correspond to the 4th and 2nd layers of VGG-9; the 4th layer produces higher-dimensional smashed data, which likely allows the adversary to capture leaked information more effectively than the 2nd layer.


These evaluations collectively demonstrate that FSInfo remains effective across multiple data types (image, tabular, text), model architectures (FCNs, CNNs, transformers), and application scenarios. Overall, FSInfo is highly consistent with actual attack performance and reliably tracks changes in privacy leakage under different system configurations, supporting its effectiveness as a privacy risk indicator.

\subsection{Evaluating FSInfo among Defenses}
\begin{figure*}[t]
  \centering
  \subfloat[inv\_dFIL\_def]{
    \label{subfig:dFIL-fsinfo}
    \includegraphics[width=0.48\linewidth]{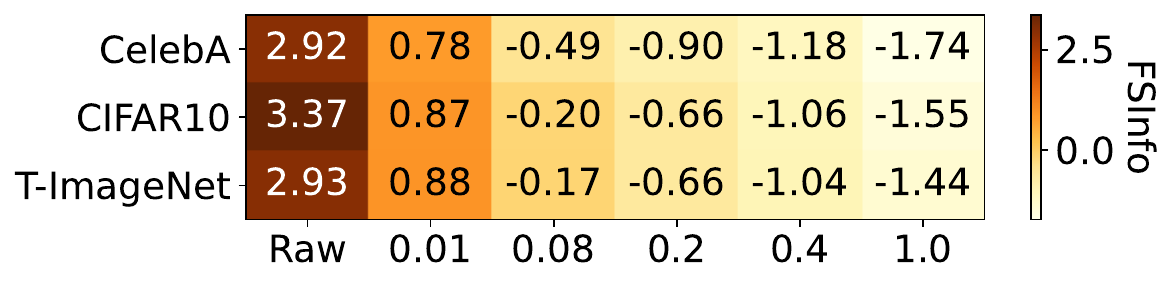}
  }
  \subfloat[Nopeek]{
    \label{subfig:nopeek-fsinfo}
    \includegraphics[width=0.48\linewidth]{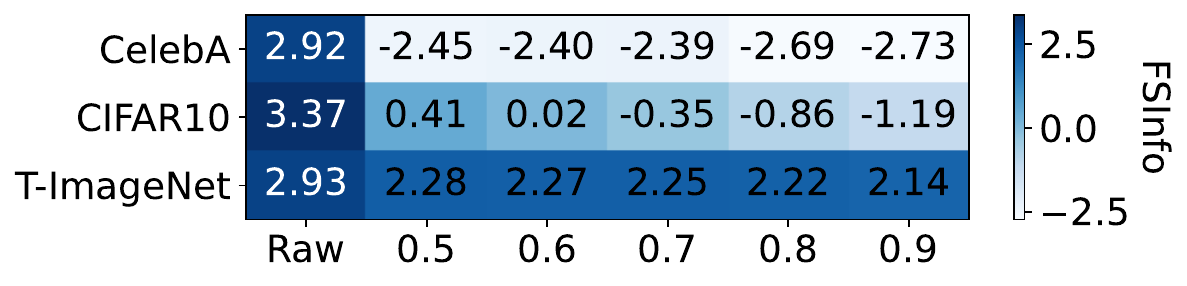}
  }
  \caption{FSInfo under different defenses and strengths.}
  \label{fig:exp1-defense-eval}
\end{figure*}

As a privacy metric, FSInfo can be used to analyze and monitor privacy risks in DNN-based systems. In this section, we demonstrate its practical relevance by showing that practitioners can also use FSInfo to assess the split inference system employed by defense mechanisms. We examine how privacy leakage changes under different defense strengths using ResNet-18 on CIFAR-10, CelebA, and TinyImageNet (T-ImageNet), with the split point set at the first convolutional layer. We apply FSInfo to SI systems equipped with one regularization-based defense (Nopeek \cite{ICDM-20-nopeek}) and one closed-form defense (inv\_dFIL\_def \cite{maeng2023bounding}). Defense strength is controlled via hyperparameters: for Nopeek, a larger regularization weight $\alpha$ implies stronger protection, while for inv\_dFIL\_def, a higher expected MSE lower bound $\frac{1}{dFIL}$ corresponds to stronger defense.





Fig.~\ref{fig:exp1-defense-eval} reports FSInfo values under different defense strengths. The horizontal axis denotes the defense hyperparameters, and the vertical axis corresponds to the datasets. \textbf{Overall, stronger defenses lead to lower privacy leakage, demonstrating the consistency of FSInfo across datasets and defense mechanisms.} An exception occurs on CelebA when the Nopeek hyperparameter increases from 0.5 to 0.7, where privacy leakage slightly increases. This suggests that at low defense strengths, the client model’s optimization may still be dominated by the cross-entropy loss rather than the regularization term.

\subsection{Evaluating FSInfo by Visual Perception}
\begin{figure}[t]
  \centering
  \includegraphics[width=0.8\linewidth]{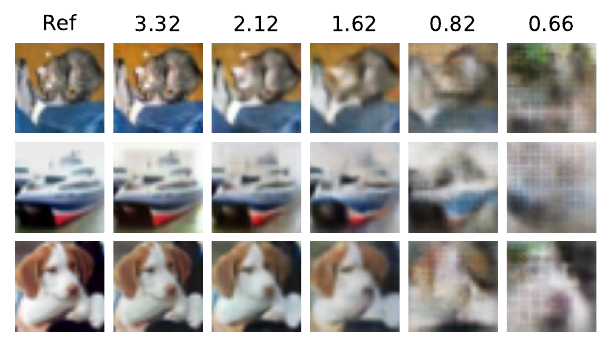}
  \caption{Visualization of raw input (left) and the reconstructed input of ResNet-18 on CIFAR-10 among different split points vs. their FSInfo.}
    \label{fig:exp1-visual-resnet18-cifar10}
\end{figure}
\begin{figure*}[t]
  \centering
  \subfloat[VGG-5, VGG-9, ResNet-18 on CIFAR-10.]{
    \includegraphics[width=0.45\linewidth]{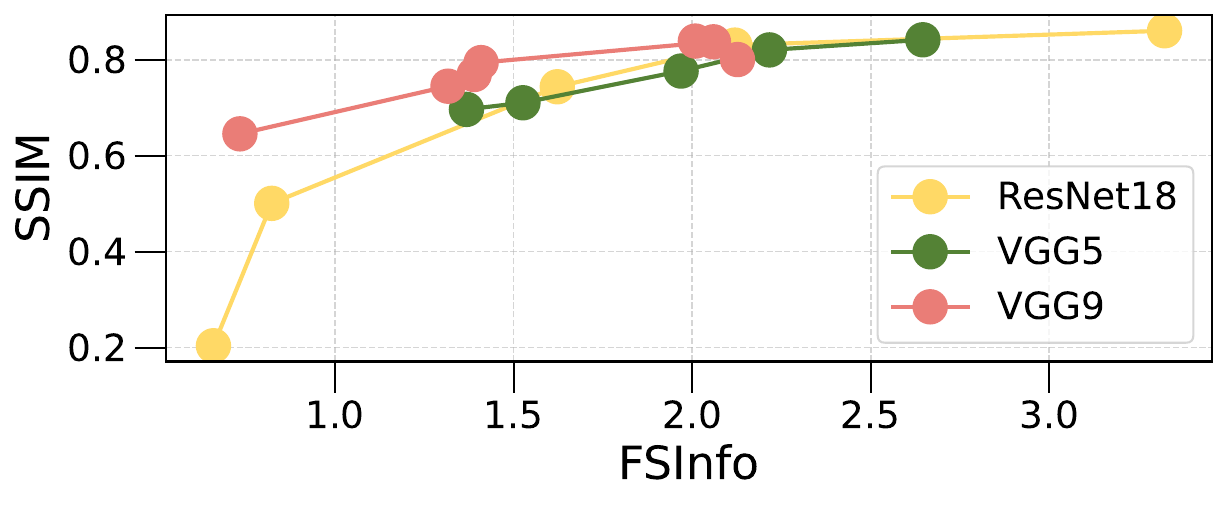}
  }
  \subfloat[VGG-5 and VGG-9 on MNIST.]{
    \includegraphics[width=0.45\linewidth]{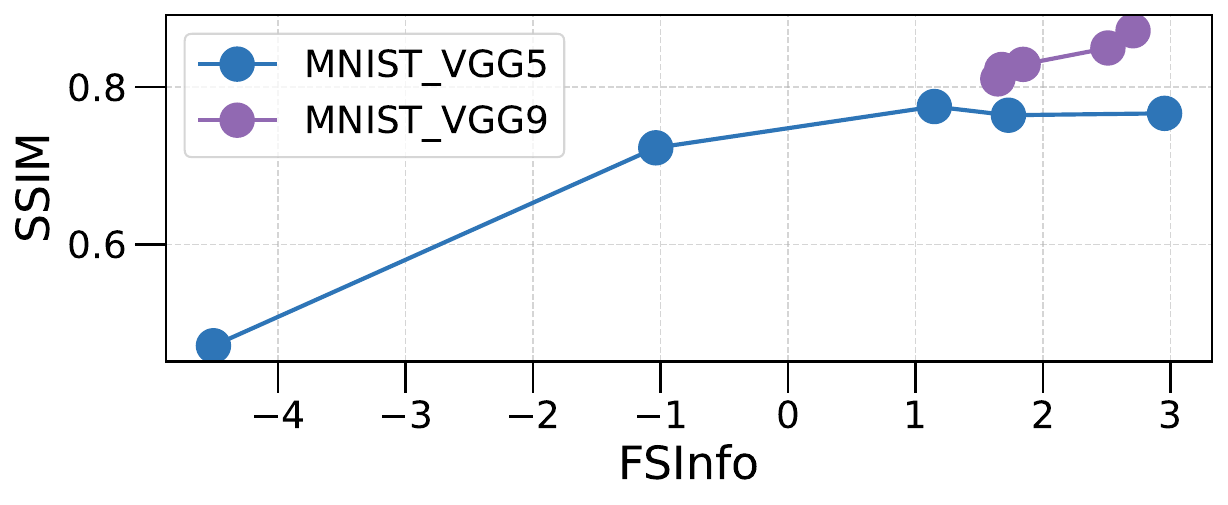}
  }
  \caption{SSIM vs. FSInfo on CIFAR-10 and MNIST.}
  \label{fig:exp1-ssim-diff-data}
\end{figure*}

We next examine whether FSInfo aligns with human perception of reconstructed images. Using CIFAR-10, we visualize reconstructions from different layers of ResNet-18 and compute the layer-averaged FSInfo. As shown in Fig.~\ref{fig:exp1-visual-resnet18-cifar10}, image clarity decreases as FSInfo decreases, indicating a strong correspondence between FSInfo and perceived visual quality. To further validate this, we plot SSIM against FSInfo in Fig.~\ref{fig:exp1-ssim-diff-data}. Overall, \textbf{SSIM positively correlates with FSInfo, suggesting that greater privacy leakage (higher FSInfo) yields reconstructions that are more discernible to humans}. The slight downward trend for VGG-9 in Fig.~\ref{fig:exp1-ssim-diff-data}(a) likely stems from the same cause noted earlier (Fig.~\ref{fig:exp1-mse-diff-data}(a))—the adversary does not fully exploit the leaked information. Moreover, in Fig.~\ref{fig:exp1-ssim-diff-data}(b), the SSIM trend for VGG-9 diverges from that of MSE (Fig.~\ref{fig:exp1-mse-diff-data}(b)), illustrating that empirical attack metrics can sometimes disagree with each other.

\subsection{Evaluating FSInfoGuard}
\begin{figure}[t]
  \centering
  \includegraphics[width=0.8\linewidth]{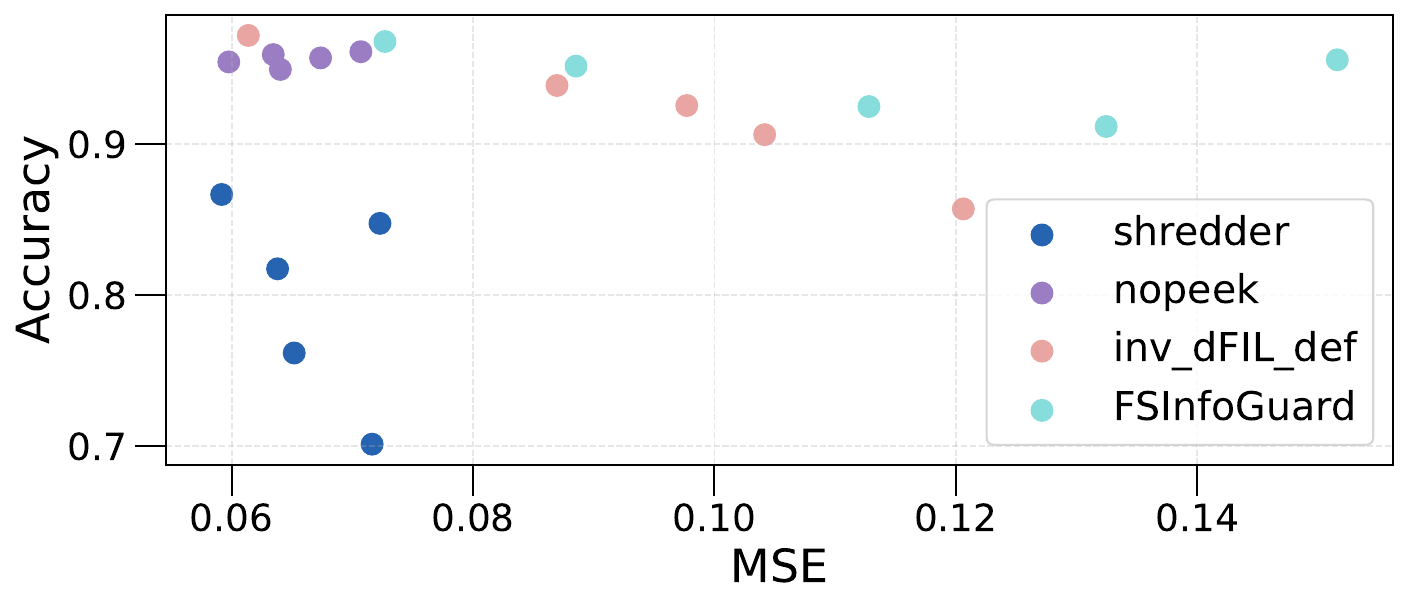}
  \caption{Task accuracy v.s. MSE on CelebA and ResNet-18.}
    \label{fig:pat-celebA-layer2}
\end{figure}


This section examines whether FSInfo can be used to directly optimize the privacy–utility trade-off (PUT) in practical defense design by comparing FSInfoGuard with existing defenses. In split inference, defensive methods obfuscate the smashed data before transmission to reduce privacy leakage. However, when these defenses rely on privacy metrics that inaccurately estimate leakage, they often introduce excessive noise, leading to significant performance degradation and a poor PUT.



We conduct experiments on CelebA using ResNet-18 and compare FSInfoGuard with Shredder, Nopeek, and inv\_dFIL\_def.
Shredder and inv\_dFIL\_def derive their Gaussian noise from their respective privacy metrics, whereas Nopeek only optimizes the bottom model without adding noise.
Defense strength is controlled by varying each method’s hyperparameters. As shown in Fig.~\ref{fig:pat-celebA-layer2}, where the $x$-axis indicates reconstruction MSE and the $y$-axis indicates task accuracy, \textbf{FSInfoGuard achieves the strongest privacy–utility trade-off}. This demonstrates that FSInfo provides more effective guidance for designing privacy-preserving mechanisms and accurately quantifies privacy leakage, enabling noise calibration that protects privacy without substantially degrading performance.

\subsection{The Impact of Intrinsic Data Characteristics}
\begin{figure}[t]
  \centering
  \includegraphics[width=0.8\linewidth]{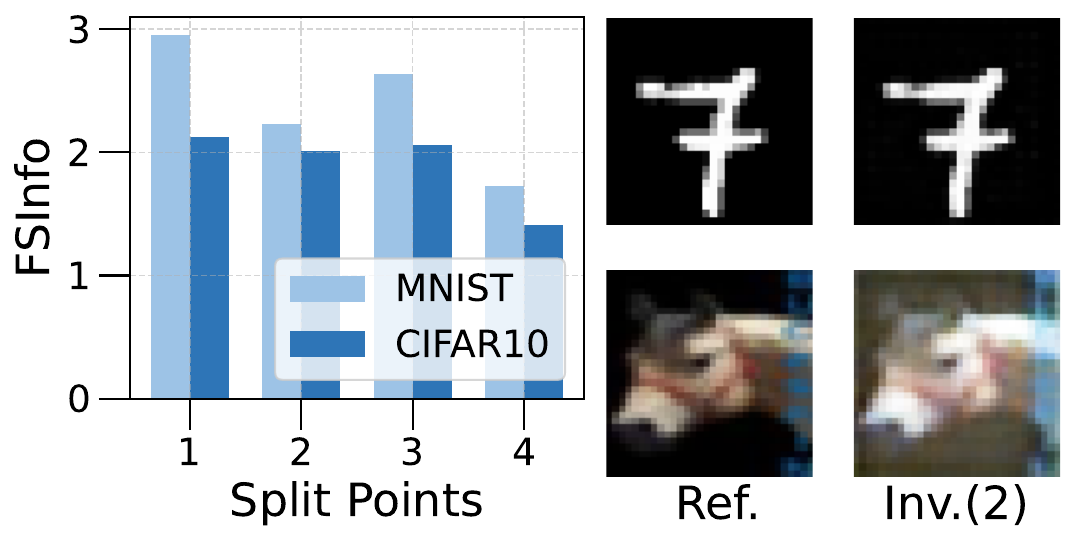}
  \caption{The impact of dataset complexity on privacy leakage. The left diagram is the FSInfo among different split layers of VGG-5. The right diagram is the raw inputs (Ref.) and the reconstructed inputs from split point 2 (Inv.(2)).}
  \label{fig:exp2-data}
\end{figure}
\begin{figure}[t]
  \centering
  \includegraphics[width=\linewidth]{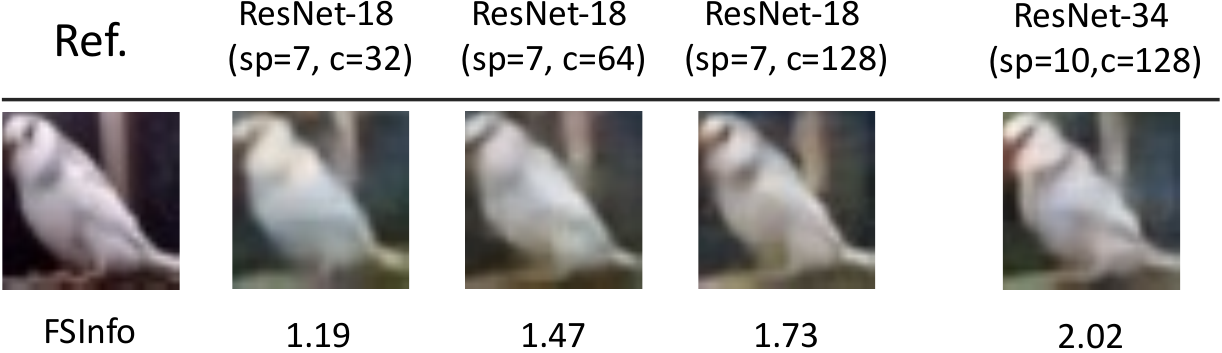}
  \caption{The impact of model size on privacy leakage on different models. Reconstructed inputs, raw input (Ref.), and FSInfo are displayed. The ResNet-18 architecture is displayed for three different base block channel sizes.}
  \label{fig:exp2-wandd1}
\end{figure}
\begin{figure}[t]
  \centering
  \includegraphics[width=0.9\linewidth]{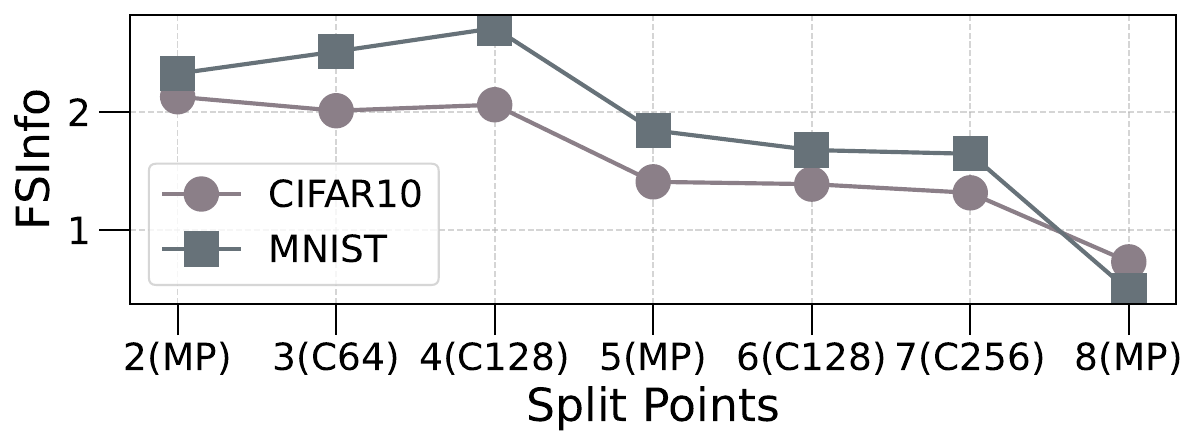}
  \caption{The impact of model size on privacy leakage on VGG-9 with different split points.}
  \label{fig:exp2-wandd2}
\end{figure}
\begin{figure}[t]
  \centering
  \includegraphics[width=0.9\linewidth]{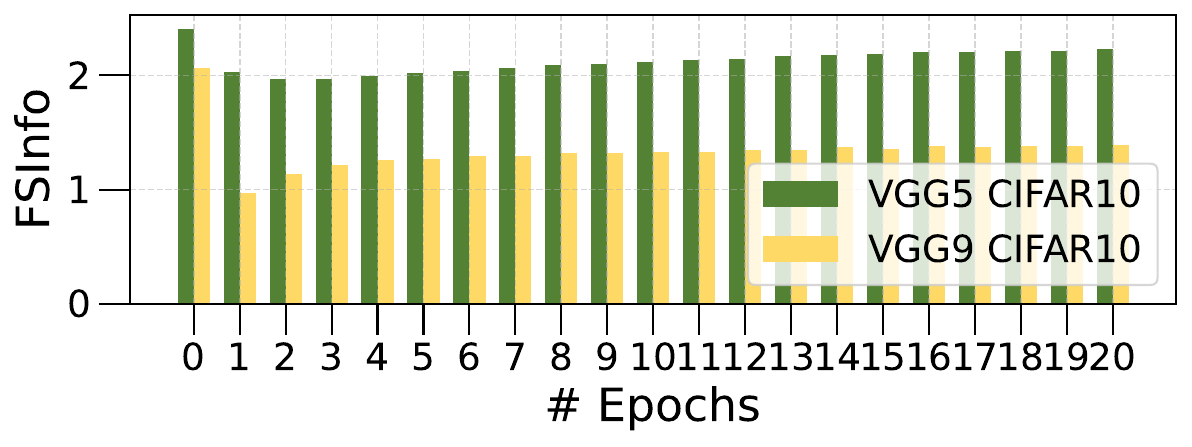}
  \caption{The impact of overfitting levels on privacy leakage. Epoch 0 means the model is just initialized and has not been trained.}
  \label{fig:exp2-overfitting}
\end{figure}

CIFAR-10 comprises diverse RGB images across 10 classes, whereas MNIST contains grayscale handwritten digits. Normalized to $[-1,1]$, CIFAR-10 exhibits richer pixel value diversity compared to MNIST's bimodal distribution peaking at -1 and 1 (see Appendix). To investigate data characteristics' impact on privacy leakage, we analyze FSInfo for both datasets across four split points (Fig.~\ref{fig:exp2-data}, left). Results demonstrate consistently lower privacy leakage for CIFAR-10, aligning with Liu et al.'s findings \cite{liu2022ml} that simpler images facilitate reconstruction. We attribute this discrepancy to dataset complexity and value space dimensions. Visual comparisons of raw and reconstructed images (Fig.~\ref{fig:exp2-data}, right) reveal that MNIST adversaries primarily handle binary color patterns, while CIFAR-10 reconstruction requires capturing complex color distributions, suggesting that richer information content increases reconstruction difficulty.

\subsection{The Impact of Model Width and Depth}

The model size (number of layers and the output size of a layer) influences neural network privacy characteristics. Fig.~\ref{fig:exp2-wandd1} demonstrates that both reconstruction quality and FSInfo increase with channel size in ResNet-18, as wider networks exhibit more pronounced neuron activation patterns. Interestingly, the deeper network (ResNet-34) may not always decrease the privacy leakage, because after being trained over 20 epochs, the ResNet-34 has a higher test accuracy than ResNet-18 on CIFAR-10, which may lead to a higher privacy leakage (see Section~\ref{sec:overfit}). Jiang et al. \cite{jiang2022fedsyl} suggest that additional layers hinder leakage through increased transformations, which is consistent with the results of Fig.~\ref{fig:exp2-wandd2} when SP is from 5 to 6. However, Fig.~\ref{fig:exp2-wandd2} reveals this intuition may be incomplete, as later split points (e.g., SP is from 3 to 4) can exhibit greater leakage with increased width. Notably, at SP=8, MNIST shows lower leakage than CIFAR-10 due to MaxPooling's differential impact on feature map dimensions (7x7→3x3 vs. 8x8→4x4). Overall, network width substantially amplifies privacy leakage, while the increased depth degrades privacy leakage more moderately. 


\subsection{The Impact of Overfitting}\label{sec:overfit}


Following established methodologies \cite{liu2022ml,sec21-evaluation-ml,Sec19-secret-share}, we refer to the training epochs as the overfitting level. Fig.~\ref{fig:exp2-overfitting} illustrates the relationship between epoch count and privacy leakage (FSInfo) for both VGG-5 and VGG-9, revealing a characteristic U-shaped pattern. Initially, untrained models exhibit high sensitivity to input stimuli, with a substantial gradient scale of smashed data w.r.t. the raw input responses leading to significant information leakage. As training progresses, this sensitivity gradually diminishes. However, prolonged training enables model parameters to memorize training data information, which subsequently manifests in smashed data during consequent testing \cite{achille2019information}, explaining the observed first-decline-then-rise FSInfo trajectory.




\section{Conclusion}
We correlate the privacy leakage with the DRA adversary's certainty and formulate it with conditional entropy, which can lower bound the adversary's reconstruction error in both average- and worst-case. We obtain an operationally computable privacy metric, FSInfo, by transforming Fisher information into Shannon quantity, and we design a defense method, FSInfoGuard, with a strong utility-privacy trade-off. 
Experimental results show that FSInfo reliably tracks actual privacy leakage across models, datasets, and defense strengths, exhibiting strong consistency with empirical attack performance and even human-perceived reconstruction quality. Also, FSInfo can be used to guide the design of defense mechanisms and to provide insights into key system components. Future work may extend the analysis of privacy leakage to other exposed information carriers (e.g., gradients in SL). Besides, the analysis of the privacy leakage determinants is conducted experimentally; future works may study it theoretically and derive more interpretable results.



{\appendices

\section*{More Related Work}\label{app:r&w}
\subsection{Data Reconstruction Attacks to SI}
Data Reconstruction Attacks (DRAs) employ two primary approaches: neural network (NN)-based methods, which learn mappings from observable outputs to raw inputs, and maximum likelihood estimation (MLE)-based methods, utilizing optimization techniques like stochastic gradient descent \cite{blackbox}. In federated learning, where only aggregated parameters and gradients are observable, MLE-based methods dominate, typically optimizing gradient matching objectives (Euclidean or cosine similarity) \cite{zhang2023survey,geiping2020inverting,jin2021cafe}.

SI scenarios favor NN-based methods due to the direct one-to-one correspondence between raw inputs and smashed data, providing ample training pairs for inverse neural networks to achieve a better attack performance than the MLE-based \cite{yin2023ginver,pasquini2021unleashing}. Recent methods in this line of strategy have focused on adjusting the training set, loss function, and decoder network architecture to prevent performance degradation in more practical scenarios. For example, the adversary does not have auxiliary datasets \cite{yin2023ginver} or face defense mechanisms \cite{yang2022measuring}. We aim to evaluate the system's robustness and provide guidelines for defense. In this case, we set loose conditions for the adversaries and allow them to use the basic but well-performing inverse network method \cite{blackbox} in our experiments.



\subsection{Information in Deep Neural Network}
In the information theory, {\itshape entropy} $H(X)$ is a measure of the uncertainty of a random variable $X$ and (Shannon) mutual information $I(X;Y)$ between random variables $X$ and $Y$ is the reduction in the uncertainty of $X$ due to the knowledge of $Y$, and vice versa \cite{cover1999elements}:
\begin{equation}
    I(X;Y) = H(X)-H(X\vert Y) = H(Y)-H(Y\vert X).
\end{equation}
Previous studies \cite{tishby2015deep,yu2020understanding} analyze DNNs by seeing the inputs, predictions, and outputs of one hidden layer (representations) as random variables $X$, $Y$, and $Z$. Model optimization can be done by minimizing $I(X;Z)$ and maximizing $I(Z;Y)$ under the information bottleneck (IB) tradeoff \cite{shwartz2017opening,yu2020understanding}.

\section*{Experimental Implementations}\label{app:model&datasets}
\subsection{Implementation}
Our experiments were performed on a server equipped with two NVIDIA GeForce RTX 4090 24GB GPUs, two Intel Xeon Silver 4310 CPUs, 256GB of memory, and Debian 6.1.119-1. We implemented the proposed metric with Python 3.10 and Pytorch 2.4.1. 

\subsection{Datasets}



\begin{table}[!h]
    \caption{Statistics of Datasets} 
    \label{tab:datasets}

    \centering
    \begin{tabular}{cccccc}
    \toprule
    Category & Dataset  & \# Train & \# Test & \# Feature & \# Class\\
    \midrule
    \multirow{2}{*}{Tabular} & Purchase100 & 158k & 39k & 600  & 100 \\
                             & Home Credit & 246k & 62k & 121 & 2 \\
    \midrule
    \multirow{4}{*}{Image}   & CIFAR-10 & 50k & 10k & 32x32x3 & 10 \\
                             & MNIST & 60k & 10k & 28x28x1 & 10\\
                             & CelebA & 163K & 20k & 178x218x3 & 2\\
                             & TinyImageNet & 100k & 10k & 64x64x3 & 4\\
    \midrule
    Text                     & GLUE/SST-2 & 67k & 1.8k & - & 2\\
    \bottomrule
    \end{tabular}
\end{table}

\begin{figure}[ht]
  \centering
  \includegraphics[width=0.6\linewidth]{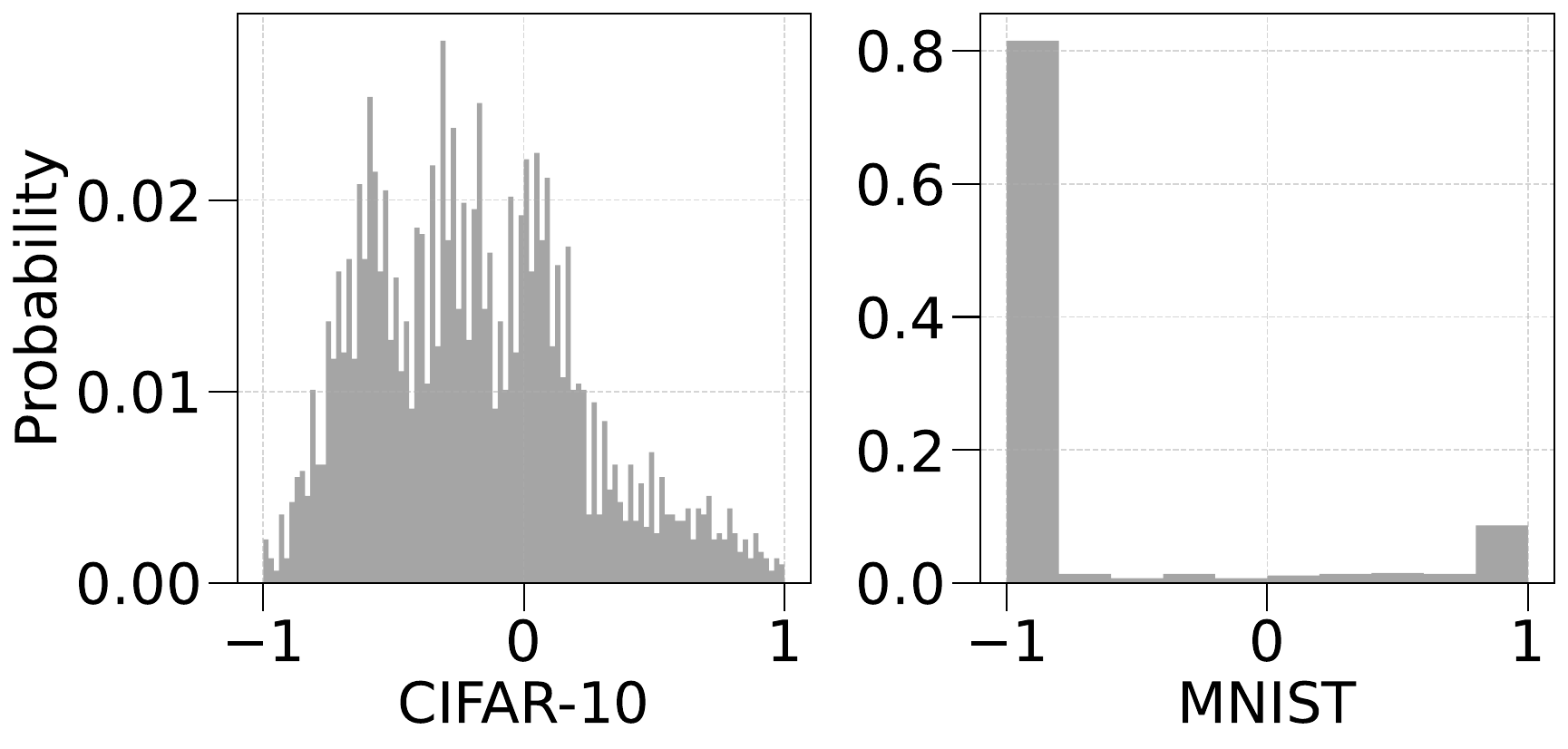}
  \caption{The distribution of the first image of CIFAR-10 and MNIST datasets.}
  \label{fig:exp2-data-distribution}
\end{figure}

We evaluated our metrics on seven real-world datasets, covering tabular, text, and image datasets, as shown in Table~\ref{tab:datasets}. All image data are first rescaled to [0,1] and then normalized to [-1,1] with a variance and mean of 0.5. For two tabular datasets, we one-hot encode the discrete columns and rescale the continuous columns to [-1,1]. For CelebA \cite{liu2015faceattributes}, we built the binary attractiveness classification task following \cite{nips2023gan} and resize the image to 112x112x3. For TinyImageNet, we built 4-class classification task follow \cite{sec-xu2025risk}.

\underline{CIFAR10} \cite{krizhevsky2009learning}: Comprising 50,000 training samples and 10,000 test samples across 10 classes, {\bf CIFAR-10} dataset is a benchmark in computer vision with categories like airplanes, automobiles, and birds. Each image is a 32x32 pixel color image with three channels.

\underline{MNIST} \cite{lecun1998gradient}: Comprising 60,000 training samples and 10,000 test samples across 10 digits, with each image being a 28x28 gray-scale image. Fig.~\ref{fig:exp2-data-distribution} plots the distribution of the first image of CIFAR-10 and MNIST, respectively. We bin the continuous pixel value of two images into 100 (for CIFAR-10) and 10 (for MNIST) discrete intervals between -1 and 1. We can see that the probability distribution of pixel values in MNIST is more concentrated around specific values, while CIFAR-10 shows a more uniform spread across the range.

\underline{CelebA} \cite{liu2015faceattributes}: is a large-scale facial attributes dataset containing over 200,000 celebrity images, which are divided into training set 162770 (proportion 0.8), verification set 19867 (proportion 0.098), and test set 19962 (proportion 0.096). Each image has 40 binary attribute annotations and 5 landmark locations. It is widely used as a benchmark for tasks such as face recognition, facial attribute prediction, and generative adversarial networks (GANs) due to its large diversity, rich annotations, and high quality.

\underline{TinyImageNet} \cite{le2015tiny}: is a downsized and more computationally manageable variant of the ImageNet Large Scale Visual Recognition Challenge (ILSVRC). It contains 100,000 images across 200 object classes, with each class having 500 training, 50 validation, and 50 test images.

\underline{Purchase100} (Purchase) \cite{sec21-evaluation-ml}: is based on Kaggle’s Acquire Valued Shoppers Challenges, which has 600 binary features indicating whether a good has been purchased or not, for identifying different shopping styles of a customer. We use the simplified version proposed by \cite{sec21-evaluation-ml}.

\underline{Home Credit} (Credit) \cite{home-credit-kaggle}: is based on Kaggle’s Home Credit Default Risk Challenge, to predict the credit applicant’s default likelihood.

\underline{GLUE/SST-2} \cite{wang2018glue}: GLUE is a multi-task benchmark for natural language understanding, where the SST-2 is designed for sentiment analysis.


\subsection{Models}
\begin{table*}[!thpb]
    \caption{Statistics of Models. Convolution layers are denoted by C, followed by the number of filters; MaxPooling, AveragePooling, and Adaptive AveragePooling layers are MP, AP, and AAP; Fully Connected layer is FC with the number of neurons; Basic Block of ResNet is BB with the channel size. EMB is the embedding layer of the transformer, and Tr is the transformer block. } 
    \label{tab:models}

    \centering
    \begin{tabular}{ccc}
    \toprule
    Model  & \# Critical Layer & Architecture \\
    \midrule
    \multirow{2}{*}{VGG-5 for CIFAR-10}& \multirow{2}{*}{7} & C32-MP-C64(default)-MP- \\
    & & C64-FC128-FC10\\
    \multirow{2}{*}{VGG-9 for CIFAR-10}& \multirow{2}{*}{14}& C64-MP-C64-C128-MP(default)-C128-\\
    &&C256-MP-C256-MP-C256-FC4096-FC10\\
    \multirow{2}{*}{ResNet-18 for CIFAR-10}& \multirow{2}{*}{12}&C64-AP-[BB64]*2-[BB128]*2 (default)- \\
    &&[BB256]*2-[BB512]*2-AAP-FC10\\
    \multirow{2}{*}{ResNet-34 for CIFAR-10}& \multirow{2}{*}{20}& C64-AP-[BB64]*3-[BB128]*4 (default)-\\
    &&[BB256]*6-[BB512]*3-AAP-FC10\\
    FCN for Purchase& 5 & FC1024-FC512(default)-FC256-FC128-FC100\\
    FCN for Credit & 4 & FC512-FC128(default)-FC32-FC1\\
    DistilBert for GLUE/SST-2& 8 & EMB768-Tr768(defulat)-[Tr768]*4-FC768-FC2 \\
    \bottomrule
    \end{tabular}

\end{table*}
We use seven models, including VGG-5 \cite{simonyan2014very}, VGG-9 \cite{simonyan2014very}, ResNet-18 \cite{he2016deep}, ResNet-34 \cite{he2016deep}, a 5-layer linear model for Purchase, a 4-layer linear model for Credit, and DistilBert \cite{sanh2019distilbert}, as shown in Table.~\ref{tab:models}. For image data, we only give the model architecture for CIFAR-10 as a representative case, since the architectures for other image datasets involve only minor modifications on the input and output layers.

The number of critical layers and architecture of these models are presented in Table~\ref{tab:models}. We also identify the default split layer in the brackets. All models except for DistilBert are trained until the test accuracy tends to stabilize by default (approximately 20 epochs). For DistilBert, we use the pretrained weights from HuggingFace.

\section*{Detailed Implementations for FSInfo calculation}
We give the detailed implementation of FSInfo in Table~\ref{tab:implementation}.
\begin{table*}[h]
    \caption{Numerical calculation procedure for FSInfo.} 
    \label{tab:implementation}

    \centering
    \begin{tabular}{c p{14cm}}
    \toprule
     Step & Actions \\
    \midrule
1 &
Calculate the $d_x log(2\pi e)$.\\
2  &
Calculate the diagonal Fisher information Matrix (FIM).\\
2.1  &
\textbf{For the small models (VGG-5, VGG-9, ResNet-18, ResNet-34, and MLPs for Credit and Purchase), we use intuitive and simple implementations:}\\
2.1.1 &
Calculate the Jacobian of $z$ w.r.t. $x$, $J_{f_{\theta 1}(x)}$ , using "torch.autograd.functional.jacobian".\\
2.1.2 &
Calculate the Fisher Information Matrix $F_{z|x} = \frac{1}{\sigma^2} J_{f_{\theta 1}(x)}^\top J_{f_{\theta 1}(x)}$ and get the diagonal vectors $\Lambda=(\lambda_1,\dots,\lambda_{d_x})$.\\
2.2  &
\textbf{For big models (DistilBert), we want to save the memory cost:}\\
2.2.1 &
Only calculate the diagonal item of $J_{f_{\theta 1}(x)}^\top J_{f_{\theta 1}(x)}$ instead of calculating the whole matrix by using "torch.autograd.functional.jvp" for memory efficiency. Then we get the diagonal vectors of  $J_{f_{\theta 1}(x)}^\top J_{f_{\theta 1}(x)}$.\\
2.2.2 &
Multiply each diagonal item of $J_{f_{\theta 1}(x)}^\top J_{f_{\theta 1}(x)}$ by $\frac{1}{\sigma^2}$. Then we get the diagonal vectors $\Lambda=(\lambda_1,\dots,\lambda_{d_x})$.\\
3 &
Calculate the term $\log\det(F_{z|x} )=\Sigma_i^{d_x} \log (\lambda_i + 1*e^{-10)})$.\\
4 &
Get the FSInfo for $x$ FSInfo = $-\frac{1}{2d_x}[d_x log(2\pi e)-\log\det(F_{z|x} )]$.\\
    \bottomrule
    \end{tabular}

\end{table*}

\section*{Proof of Theorem 4.2}\label{app:proof}
Tan et al. \cite{sec24-defending-FL-ita} have modeled the estimation error of the DRA adversaries who aim at reconstructing the training dataset $D\in\mathbb{R}^d$ of one client from the model parameters $W\in\mathbb{R}^m$ in the federated learning scenarios by giving Theorem~1 in \cite{sec24-defending-FL-ita}:
\begin{theorem} \label{theorem:Tan-FL}
({Theorem~1 in \cite{sec24-defending-FL-ita}}). For any random variable $D\in\mathbb{R}^d$ and $W\in\mathbb{R}^m$, we have
\begin{equation}
\mathbb{E}[\Vert D-\hat{D}(W)\Vert^2/d] \geq \frac{e^{2h(D)/d}}{2\pi e}e^{-2I(D;W)/d},
\end{equation}
where $\hat{D}(W)$ is an estimator of $D$ constructed by $W$.
\end{theorem}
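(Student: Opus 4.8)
The plan is to reduce the stated bound to a purely entropic inequality and then invoke the maximum-entropy property of the Gaussian. First I would rewrite the right-hand side using the identity $I(D;W) = h(D) - h(D\mid W)$, so that
\begin{equation}
\frac{e^{2h(D)/d}}{2\pi e}\,e^{-2I(D;W)/d} = \frac{e^{2h(D\mid W)/d}}{2\pi e}.
\end{equation}
It therefore suffices to establish the equivalent claim $\mathbb{E}[\Vert D-\hat{D}(W)\Vert^2]/d \geq e^{2h(D\mid W)/d}/(2\pi e)$, which no longer references the mutual information explicitly.

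Next I would introduce the estimation error $E = D - \hat{D}(W)$. Since $\hat{D}(W)$ is a deterministic function of $W$, subtracting it conditioned on $W$ is merely a translation and leaves the conditional differential entropy unchanged, giving $h(D\mid W) = h(E\mid W)$. Because conditioning can only reduce differential entropy, $h(E\mid W) \leq h(E)$, and hence $h(D\mid W) \leq h(E)$. This is the key structural step, transferring the conditional entropy onto the error vector.

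The third step bounds $h(E)$ from above. Among all $d$-dimensional random vectors with a fixed covariance $\Sigma_E$, the Gaussian maximizes differential entropy, so $h(E) \leq \tfrac12\log\bigl((2\pi e)^d \det\Sigma_E\bigr)$. Applying the AM--GM inequality for positive semidefinite matrices, $\det\Sigma_E \leq (\operatorname{tr}\Sigma_E/d)^d$, together with $\operatorname{tr}\Sigma_E \leq \mathbb{E}[\Vert E\Vert^2]$ (the trace of the covariance is the centered second moment, which is at most the raw second moment), yields
\begin{equation}
h(D\mid W) \leq \frac{d}{2}\log\Bigl(2\pi e\cdot \frac{\mathbb{E}[\Vert E\Vert^2]}{d}\Bigr).
\end{equation}
Exponentiating and rearranging for $\mathbb{E}[\Vert E\Vert^2]/d$, then substituting $E = D - \hat{D}(W)$ and $h(D\mid W) = h(D) - I(D;W)$, recovers the theorem.

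I expect the main subtlety to lie in step two: justifying $h(D - \hat{D}(W)\mid W) = h(D\mid W)$ for an arbitrary (possibly biased) estimator, which hinges only on translation-invariance of differential entropy given $W$ and so requires no unbiasedness assumption. The inequality $\operatorname{tr}\Sigma_E \leq \mathbb{E}[\Vert E\Vert^2]$ also warrants care, since its slack is precisely the squared bias $\Vert\mathbb{E}[E]\Vert^2$; fortunately it points in the favorable direction. The remaining ingredients—the Gaussian maximum-entropy bound and AM--GM for the determinant—are standard and require no new argument.
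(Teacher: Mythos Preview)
Your proof is correct. The paper does not actually prove this theorem---it is quoted from \cite{sec24-defending-FL-ita}---but it does give, ``for better understanding,'' a proof of the closely related Theorem~4.2 that follows the same template. The ingredients are identical to yours (maximum-entropy Gaussian bound plus the AM--GM inequality $\det A\le(\operatorname{tr}A/d)^d$, stated there as Lemma~A.1), but the organization differs. The paper works pointwise on the conditional: it bounds $h(\cdot\mid W{=}w)$ by the Gaussian entropy of the conditional covariance, applies AM--GM to that conditional covariance, identifies its trace with the conditional MSE by invoking the unbiasedness assumption, and only then pulls the expectation over $w$ through the logarithm via Jensen. You instead remove the conditioning first, using translation invariance $h(D\mid W)=h(E\mid W)$ and then $h(E\mid W)\le h(E)$, and apply Gaussian~+~AM--GM once to the marginal $\Sigma_E$. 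What your ordering buys is that the biased case is handled transparently (the slack $\operatorname{tr}\Sigma_E\le\mathbb{E}\Vert E\Vert^2$ absorbs the squared bias, as you note) and there is no separate Jensen step; the paper's ordering is slightly more granular but ends up invoking unbiasedness where only an inequality is needed.
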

By simply applying Theorem~\ref{theorem:Tan-FL} in the privacy quantification in smashed data about raw input in split inference settings, we reach Theorem~\ref{theorem:REB-Tan}. For better understanding, we follow \cite{sec24-defending-FL-ita} to give the proof details of Theorem~\ref{theorem:REB-Tan} as follows:

To prove Theorem~4.2, we first introduce the following lemma:
\begin{lemma}\label{lemma:1}
    {(Lemma 2 in \cite{sec24-defending-FL-ita}).} For any $d_x$-dimensional semi-positive definite matrix $A$ ($A \in \mathbb{R}^{d_x\times d_x}$), we have:
    \begin{equation}
        det(A) \leq (\frac{tr(A)}{d_x})^{d_x},
    \end{equation}
    where the $det(\cdot)$ is the determinant and the $tr(\cdot)$ denotes the trace.
\end{lemma}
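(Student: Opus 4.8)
The plan is to reduce this matrix inequality to the scalar arithmetic--geometric mean (AM--GM) inequality applied to the eigenvalues of $A$. Since $A$ is symmetric and positive semi-definite, the spectral theorem guarantees that $A$ is orthogonally diagonalizable with real, non-negative eigenvalues $\lambda_1,\dots,\lambda_{d_x}\ge 0$. First I would record the two standard identities $\det(A)=\prod_{i=1}^{d_x}\lambda_i$ and $\operatorname{tr}(A)=\sum_{i=1}^{d_x}\lambda_i$, both of which are immediate from the diagonalization (the determinant and trace being similarity invariants).

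With these identities in hand, the claim becomes $\prod_{i=1}^{d_x}\lambda_i \le \bigl(\tfrac{1}{d_x}\sum_{i=1}^{d_x}\lambda_i\bigr)^{d_x}$, which is exactly AM--GM for the non-negative reals $\lambda_1,\dots,\lambda_{d_x}$: the geometric mean $\bigl(\prod_i\lambda_i\bigr)^{1/d_x}$ never exceeds the arithmetic mean $\tfrac{1}{d_x}\sum_i\lambda_i$. Raising both (non-negative) sides to the $d_x$-th power then yields the stated bound $\det(A)\le(\operatorname{tr}(A)/d_x)^{d_x}$. I would invoke AM--GM as a classical fact, or, if a self-contained derivation is preferred, obtain it in one line from Jensen's inequality applied to the concave function $\log$ on $\mathbb{R}^{+}$; the degenerate case in which some $\lambda_i=0$ is trivial, since then $\det(A)=0$ while the right-hand side is non-negative.

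There is essentially no serious obstacle here; the only points needing a word of care are (i) the appeal to the spectral theorem to guarantee that the eigenvalues are real and non-negative --- which is precisely where the positive semi-definiteness hypothesis enters --- and (ii) the boundary case of vanishing eigenvalues, handled as above. For completeness one may note that equality holds iff all $\lambda_i$ coincide, i.e.\ iff $A$ is a non-negative multiple of the identity, but this refinement is not required for the subsequent application to Theorem~\ref{theorem:REB-Tan}.
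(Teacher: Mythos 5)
Your proof is correct: the paper does not prove this lemma itself (it imports it verbatim as Lemma~2 of the cited reference), and your argument via the spectral theorem, the identities $\det(A)=\prod_i\lambda_i$ and $\operatorname{tr}(A)=\sum_i\lambda_i$, and AM--GM on the non-negative eigenvalues is exactly the standard derivation, with the zero-eigenvalue boundary case handled properly. Nothing is missing.
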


Now, we prove the Theorem~4.2 as below:
\begin{proof}
Let $X$ be the raw input, $\hat{X}$ be the reconstructed input, and $x\sim p(X)$ we have:
\begin{align} 
    & H(\hat{X}|X) \\
     =& \mathbb{E}_{x} H(\hat{X}|X=x) \\
    \leq & \mathbb{E}_{x} \left[ \frac{d_x}{2} \log (2\pi e) + \frac{1}{2} \log \det (Cov(\hat{X}|X=x)) \right] \label{eq:maxentropy} \\
    =& \frac{d_x}{2} \log (2\pi e) + \mathbb{E}_{x} \left[ \frac{1}{2} \log \det (Cov(\hat{X}|X=x)) \right] \\
    \leq & \frac{d_x}{2} \log (2\pi e) + \mathbb{E}_{x} \left[ \frac{d_x}{2} \log \frac{tr(Cov(\hat{X}|X=x))}{d_x} \right] \label{eq:lemma1}\\
    = & \frac{d_x}{2} \log (2\pi e) + \mathbb{E}_{x} \left[ \frac{d_x}{2} \log \frac{\mathbb{E}_{p(X,\hat{X})}\left[ ||\hat{X}-X||^2 |X=x \right]}{d_x} \right] \label{eq:unbiased} \\    
    \leq & \frac{d_x}{2} \log (2\pi e) +  \frac{d_x}{2} \log \frac{\mathbb{E}_{x} \left[ \mathbb{E}_{p(X,\hat{X})}\left[ ||\hat{X}-X||^2 |X=x \right] \right]}{d_x} \label{eq:Jensen}\\
    =& \frac{d_x}{2} \log (2\pi e) +  \frac{d_x}{2} \log \frac{\mathbb{E}_{p(X,\hat{X})}\left[ ||\hat{X}-X||^2 \right]}{d_x}.
\end{align}
The first inequality holds because the Gaussian distribution is the maximum entropy distribution among distributions with the same mean value and covariance. The second inequality depends on Lemma~\ref{lemma:1}. The Equation~\ref{eq:unbiased} holds because,
\begin{equation}
    tr(Cov(\hat{X} \vert X=x))=\sum_{i=1}^{d_x}\mathbb{E}_{p(\hat{X_i})}[(\hat{X_i}-\mathbb{E}[\hat{X_i}])^2 \vert X=x],
\end{equation}
where the mean vector $\mathbb{E}_{p(\hat{X_i})} = X$ when the DRA adversary is unbiased, and the second-order norm of $(\hat{X}-X)$ is $\sqrt{\sum_{i=1}^{d_x}(\hat{X}_i-X_i)^2}$. The third inequality stems from Jensen's inequality, where for all concave functions $f$ (e.g., $\log$), $f(\mathbb{E}[x]) \geq \mathbb{E}[f(x)]$.
Hence, we can have:
\begin{equation}
    \frac{\mathbb{E}_{p(X,\hat{X})}[\Vert X-\hat{X} \Vert^2]}{d_x} \geq \frac{e^{\frac{2}{d_x}(H(\hat{X}\vert X))}}{2\pi e}.
\end{equation}

\end{proof}

\section*{Proof of Theorem 4.4}\label{app:proof-REB-minimax}



First, we have the following Lemma:
\begin{lemma}\label{lemma:3} (Theorem 9 in \cite{scarlett2019introductory}). Under the minimax estimation setup in Section~4.2, fix $\epsilon>0$, and index $V\in\mathcal{V}$ is drawn from a prior distribution $P_V$. $X_V=\{x_{v_1},\dots,x_{v_{|\mathcal{V}|}}\}$ drawn from a probability distribution $P_{X|V}$ be a finite subset of $\mathcal{X}$ such that
\begin{equation}
\rho(x_v,x_{v'})\geq \epsilon, \forall v,v' \in \mathcal{V}, v\neq v'.
\end{equation}
Then, we have
\begin{equation}
\mathcal{M}_n(\mathcal{X},\ell) \geq \Phi(\frac{\epsilon}{2})(1-\frac{I(V;\hat{V})+\log 2}{log |\mathcal{V}|}),
\end{equation}
where $V$ is uniform on $\mathcal{V}$, and the mutual information is with respect to $V \rightarrow X_V \rightarrow Z \rightarrow \hat{X} \rightarrow \hat{V}$.
\end{lemma}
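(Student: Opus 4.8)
The plan is to prove this via the classical reduction from minimax estimation to multiple-hypothesis testing, followed by Fano's inequality. First I would restrict the supremum over $\mathcal{X}$ in the definition of $\mathcal{M}_n$ to the finite packing set $X_V = \{x_{v_1},\dots,x_{v_{|\mathcal{V}|}}\}$ and lower bound the resulting supremum over these points by an average under the uniform prior $V$. Since $X_V \subseteq \mathcal{X}$ and the maximum dominates any average,
\[
\mathcal{M}_n(\mathcal{X},\ell) = \inf_{\hat{X}}\sup_{X\in\mathcal{X}}\mathbb{E}[\ell(X,\hat{X})] \geq \inf_{\hat{X}}\frac{1}{|\mathcal{V}|}\sum_{v\in\mathcal{V}}\mathbb{E}\bigl[\ell(x_v,\hat{X})\mid V=v\bigr],
\]
so it suffices to lower bound the Bayes risk of an arbitrary estimator $\hat{X}$ against the uniform prior on the packing set.

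Second, I would convert this estimation risk into a testing error probability by introducing the minimum-distance decoder $\hat{V} = \arg\min_{v'\in\mathcal{V}}\rho(\hat{X},x_{v'})$, which forms the final link of the Markov chain $V\to X_V\to Z\to \hat{X}\to \hat{V}$. The key geometric step is to show that a decoding error forces a large loss: if $\hat{V}=v'\neq v=V$, then $\rho(\hat{X},x_{v'})\leq \rho(\hat{X},x_v)$ by definition of $\hat{V}$, and the triangle inequality together with the packing condition $\rho(x_v,x_{v'})\geq\epsilon$ yield
\[
\epsilon \leq \rho(x_v,x_{v'}) \leq \rho(x_v,\hat{X}) + \rho(\hat{X},x_{v'}) \leq 2\,\rho(x_v,\hat{X}),
\]
whence $\rho(x_v,\hat{X})\geq \epsilon/2$ and, since $\Phi$ is increasing, $\ell(x_v,\hat{X}) = \Phi(\rho(x_v,\hat{X}))\geq \Phi(\epsilon/2)$. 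Taking expectations gives the clean inequality $\mathbb{E}[\ell(x_V,\hat{X})]\geq \Phi(\epsilon/2)\,\Pr[\hat{V}\neq V]$.

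Third, I would control the error probability $P_e = \Pr[\hat{V}\neq V]$ with Fano's inequality. Because $V$ is uniform on $\mathcal{V}$ we have $H(V)=\log|\mathcal{V}|$, and Fano's inequality applied along the chain gives $H(V\mid\hat{V})\leq \log 2 + P_e\log|\mathcal{V}|$; combining this with the identity $H(V\mid\hat{V})=H(V)-I(V;\hat{V})$ and rearranging produces
\[
P_e \geq 1 - \frac{I(V;\hat{V})+\log 2}{\log|\mathcal{V}|}.
\]
Chaining the three reductions then delivers the claimed bound $\mathcal{M}_n(\mathcal{X},\ell)\geq \Phi(\tfrac{\epsilon}{2})\bigl(1-\tfrac{I(V;\hat{V})+\log 2}{\log|\mathcal{V}|}\bigr)$.

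The part requiring most care will be the reduction step itself: verifying that the minimum-distance decoder is well-defined and that the triangle-inequality argument relies only on $\rho$ being a genuine metric, so that the conclusion holds for the general loss $\ell=\Phi\circ\rho$ rather than only for squared error. A secondary subtlety is the bookkeeping of the Markov chain $V\to X_V\to Z\to\hat{X}\to\hat{V}$: although the data-processing inequality would permit replacing $I(V;\hat{V})$ by the smaller $I(V;Z)$, the statement is phrased directly in terms of $I(V;\hat{V})$, so I would retain the decoder's mutual information and invoke data processing only if a downstream corollary requires tightening.
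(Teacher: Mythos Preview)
Your argument is correct and is precisely the standard proof of this Fano-type lower bound: restrict to a packing, pass to the Bayes risk under the uniform prior, use the minimum-distance decoder and the triangle inequality to show $\{\hat V\neq V\}\Rightarrow \ell(x_V,\hat X)\geq\Phi(\epsilon/2)$, and finish with Fano's inequality in the weakened form $H(V\mid\hat V)\leq \log 2 + P_e\log|\mathcal V|$. Each of the three reductions is handled cleanly, and the caveats you flag (that $\rho$ must be a genuine metric, and that one may but need not invoke data processing to replace $I(V;\hat V)$ by $I(V;Z)$) are exactly the right ones.

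There is, however, nothing to compare against: the paper does not prove this lemma at all. It is stated verbatim as Theorem~9 of \cite{scarlett2019introductory} and used as a black box; the \texttt{proof} environment that immediately follows the lemma in the appendix is the proof of Theorem~4.4, which simply \emph{applies} the lemma by instantiating $V=X_V=X$ and $\hat V=\hat X$. So your write-up supplies strictly more than the paper does, and what you supply is the textbook derivation one would find in the cited source.
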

\begin{proof}
In Lemma~\ref{lemma:3}, the $X_V$ is identified by $V$, so we let both $V$ and $X_V$ be the raw input $X$ in split inference. In this case, given the $V=X$, we have $X_V=V=X$, where the $X_V$ is uniquely identified by $V$. 
Then, we let $Z$ be the smashed data $Z$ and both $\hat{X}$ and $\hat{V}$ be the reconstructed input $\hat{X}$ in split inference system, and we have the Markov chain $$X \rightarrow X \rightarrow Z \rightarrow \hat{X} \rightarrow \hat{X}. $$ Follow \cite{noorbakhsh2024inf2guard}, we assume $X$ is uniform over $\mathcal{X}$. Then by applying Lemma~\ref{lemma:3} and using $I(X;\hat{X})=H(\hat{X})-H(\hat{X}|X)$, we reach out Theorem~4.4.
\end{proof}

\bibliographystyle{IEEEtran}
\bibliography{main}
%

 
\vspace{11pt}

\begin{IEEEbiography}[{\includegraphics[width=1in,height=1.25in,clip,keepaspectratio]{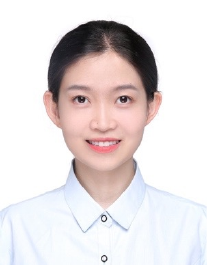}}]{Ruijun Deng}
received her B.S. degree in Computer Science from Fudan University in 2022. Currently, she is pursuing the Master of Computer Science degree with Fudan University, China. Her research interests include federated learning, split learning, and AI security.
\end{IEEEbiography}
\vspace{10pt}

\begin{IEEEbiography}[{\includegraphics[width=1in,height=1.25in,clip,keepaspectratio]{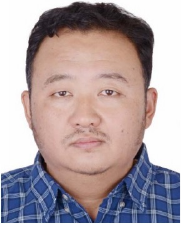}}]{Zhihui Lu}
(Member, IEEE) is a Professor at the School of Computer Science, Fudan University. He received a Ph.D. computer science degree from Fudan University in 2004 and is a member of the IEEE and China Computer Federation’s service computing specialized committee. His research interests are cloud computing and service computing technology, big data architecture, edge computing, and IoT distributed systems. He has (co-)authored two books and more than 100 journal articles and conference papers in these areas.
\end{IEEEbiography}
\vspace{10pt}

\begin{IEEEbiography}[{\includegraphics[width=1in,height=1.25in,clip,keepaspectratio]{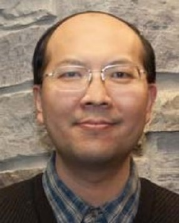}}]{Qiang Duan}
(Senior Member, IEEE) is currently a Professor with the College of Information Sciences and Technology, The Pennsylvania State University, Abington, PA, USA. His current research interests include network virtualization and softwarization, cognitive and autonomous networking, and edge computing-based ubiquitous intelligence. He has published three books and more than 100 research papers in these areas. He has served as an editor/associate editor for multiple research journals.
\end{IEEEbiography}

\begin{IEEEbiography}[{\includegraphics[width=1in,height=1.25in,clip,keepaspectratio]{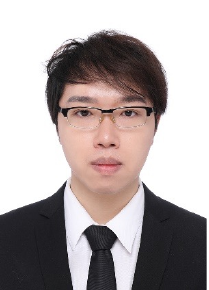}}]{Shijing Hu}
received his B.S. in Computer Science from Peking University in 2021. Currently, he is pursuing doctor's degree at the School of Computer Science, Fudan University, China. His research interests include edge intelligence and large language models. 
\end{IEEEbiography}
\vspace{10pt}




\vfill

\end{document}